\theoremstyle{plain}
\newtheorem{theorem}{Theorem}
\newtheorem{lemma}{Lemma}
\theoremstyle{definition}
\theoremstyle{remark}
\newtheorem{remark}{Remark}
\title{Multiple Testing of One-Sided Hypotheses with Conservative $p$-values}
\date{}
\author{
Kwangok Seo\thanks{Department of Statistics, Seoul National University, Seoul, Korea}
\and
Johan Lim\thanks{Department of Statistics, Seoul National University, Seoul, Korea}
\thanks{Corresponding author: johanlim@snu.ac.kr}
\and
Hyungwon Choi\thanks{School of Medicine, National University of Singapore, Singapore}
\and
Jaesik Jeong\thanks{Department of Statistics, Chonnam National University, Gwangju, Korea}
}
\begin{document}

\maketitle

\begin{abstract}
We study a large-scale one-sided multiple testing problem in which test statistics follow normal distributions with unit variance, and the goal is to identify signals with positive mean effects. A conventional approach is to compute $p$-values under the assumption that all null means are exactly zero and then apply standard multiple testing procedures such as the Benjamini–Hochberg (BH) or Storey–BH method. However, because the null hypothesis is composite, some null means may be strictly negative. In this case, the resulting $p$-values are conservative, leading to a substantial loss of power.
Existing methods address this issue by modifying the multiple testing procedure itself, for example through conditioning strategies or discarding rules. In contrast, we focus on correcting the $p$-values so that they are exact under the null. Specifically, we estimate the marginal null distribution of the test statistics within an empirical Bayes framework and construct refined $p$-values based on this estimated distribution. These refined $p$-values can then be directly used in standard multiple testing procedures without modification.
Extensive simulation studies show that the proposed method substantially improves power when conventional $p$-values are conservative, while achieving comparable performance to existing methods when conventional $p$-values are exact. An application to phosphorylation data further demonstrates the practical effectiveness of our approach.
\medskip

\noindent
\textbf{Keywords:} conservative $p$-value; one-sided hypothesis; empirical Bayes; normal mean inference; 
\end{abstract}

\section{Introduction} 

\subsection{One-Sided Hypothesis Tests with Conservative $p$-Values}
Consider random variables $Z_i \sim N(\mu_i,1)$ for $i \in [m] \coloneqq \{1,2,\ldots,m\}$, 
where $N(\mu,\sigma^2)$ denotes the Gaussian distribution with mean $\mu$ and variance $\sigma^2$.  
We are interested in testing the one-sided hypotheses
\begin{equation*}
    H_{0,i}: \mu_i \le 0
    \quad \text{vs.} \quad
    H_{1,i}: \mu_i > 0,
    \qquad i \in [m].
\end{equation*}
Our aim is to reject as many false null hypotheses as possible while controlling the false discovery rate (FDR, \cite{benjamini1995controlling}) at a prespecified level $q$, based on the observed statistics $\{Z_i\}_{i=1}^m$.\footnote{
Several alternatives to the FDR have been proposed for controlling Type I error in multiple testing, including positive FDR (pFDR; \cite{storey2002direct}), false discovery exceedance (FDX; \cite{genovese2006exceedance}), marginal FDR (mFDR; \cite{sun2007oracle}), sequential goodness-of-fit (SGoF; \cite{martinez2014correlated}), directional FDR (FDR$_{\mathrm{dir}}$; \cite{barber2019knockoff}), and the number of significant effects (NSE; \cite{de2025controlling}), among others. Nevertheless, FDR remains the most widely used criterion for Type I error control, and thus we adopt it throughout this paper.
}
Such one-sided testing problems arise in many applications, including A/B testing \citep{tian2019addis}, qualitative interaction analysis \citep{zhao2019multiple}, and goodness-of-fit assessments for item response theory models \citep{ellis2020gaining}, among others.

A conventional approach is to compute $p$-values as
\begin{equation}\label{conserv-p}
    p_i^{\mathrm{std}} = 1 - \Phi(Z_i), \qquad i \in [m],
\end{equation}
where $\Phi(\cdot)$ denotes the standard Gaussian distribution function.
The resulting $p_i^{\mathrm{std}}$ are then subjected to a multiple testing procedure such as the Benjamini--Hochberg (BH) method \citep{benjamini1995controlling} or the Storey--BH method \citep{storey2002direct,storey2004strong}.

To study the behavior of the $p$-values defined in \eqref{conserv-p}, we introduce the following terminology.
A $p$-value $p$ is called \emph{exact} if, under the null hypothesis, it follows the $\mathrm{Unif}(0,1)$ distribution;  
it is \emph{valid} if it satisfies
\begin{equation*}
    \mathbb{P}(p \le u) \le u, \qquad \forall u \in [0,1];
\end{equation*}
and it is \emph{conservative} if the inequality is strict for some $u \in [0,1]$.

Let $\mathcal{H}_0$ and $\mathcal{H}_1$ denote the index sets of true null and non-null hypotheses, respectively. For $i \in \mathcal{H}_0$, the $p$-value $p_i^{\mathrm{std}}$ defined in \eqref{conserv-p} is always valid but not necessarily exact, since the null hypotheses are composite and may include effects with $\mu_i < 0$.
When $\mu_i < 0$, we have $\mathbb{P}(p_i^{\mathrm{std}} \le u) < u$ for all $u \in (0,1)$, which implies that $p_i^{\mathrm{std}}$ is conservative. Thus, the set $\{p_i^{\mathrm{std}} : i \in \mathcal{H}_0\}$ comprise a mixture of exact $p$-values (arising from $\mu_i=0$) and conservative $p$-values (arising from $\mu_i<0$).

One may ignore the presence of conservative $p$-values and apply BH or Storey--BH. In this case, FDR control remains valid, but the procedure can suffer substantial power loss when conservative $p$-values are abundant.

\subsection{Problem Setting}
Throughout the remainder of the paper, unless stated otherwise, we consider an empirical Bayes setting. 
Specifically, we assume that the observed Gaussian statistics $\{Z_i\}_{i=1}^m$ are generated from the hierarchical model
\begin{equation}\label{hier_model}
    \mu_i \overset{\mathrm{i.i.d.}}{\sim} \pi_0\, M_0(\cdot) + (1-\pi_0)\, M_1(\cdot), \qquad 
    Z_i \mid \mu_i \overset{\mathrm{ind.}}{\sim} N(\mu_i,1), \quad i \in [m],
\end{equation}
where $\pi_0$ denotes the proportion of true null hypotheses, and $M_0$ and $M_1$ denote the distributions of the null and non-null means, respectively, with
$\mathrm{supp}(M_0)=(-\infty,0]$ and $\mathrm{supp}(M_1)=(0,\infty)$.

Under this model, for every $i \in \mathcal{H}_0$, the $p$-value defined in \eqref{conserv-p} is valid for any choice of the null distribution $M_0$:
\begin{equation*}
    \mathbb{P}(p_i^{\mathrm{std}} \le u)
    = \mathbb{E}\!\left[\mathbb{P}(p_i^{\mathrm{std}} \le u \mid \mu_i)\right]
    = \mathbb{E}\!\left[1 - \Phi\!\left(\Phi^{-1}(1-u)-\mu_i\right)\right]
    \overset{(*)}{\le} u, \qquad \forall u \in [0,1],
\end{equation*}
where the last inequality $(*)$ holds because $\mathrm{supp}(M_0) = (-\infty, 0]$.
Consequently, applying the BH or Storey--BH procedure to these $p$-values controls the FDR at the prespecified level $q$.

However, the inequality $(*)$ becomes an equality only when the null distribution $M_0$ degenerates at zero, that is, when all its mass is concentrated at $\mu = 0$. 
Consequently, unless $M_0$ is degenerate at zero, the inequality $(*)$ is strict, and hence $p_i^{\mathrm{std}}$ is conservative for all $i \in \mathcal{H}_0$, leading to the same power loss phenomenon discussed earlier. A key research question is how to alleviate the power loss induced by conservative $p$-values.

\subsection{Existing Methods and Their Applicability}
In this section we briefly review existing methods designed to mitigate the loss of power induced by conservative $p$-values. Before introducing specific procedures, we present two notions introduced in Zhao et al.\ (2019) \cite{zhao2019multiple}: \emph{uniform validity} and \emph{uniform conservativeness}. A valid $p$-value $p$ is said to be uniformly valid if
\begin{equation*}
    \mathbb{P}(p/\tau \le u \mid p \le \tau) \le u,
    \qquad 0 < \tau, u \le 1,
\end{equation*}
and it is uniformly conservative if it is both conservative and uniformly valid.

Tian et al.\ (2019) \cite{tian2019addis} proposed the D-StBH procedure, an extension of the Storey--BH method that additionally incorporates a discarding rule in estimating the null proportion. 
Given user-chosen constants $\lambda < \tau \in (0,1]$ and a target FDR level $q$, they define
\begin{equation*}
    \widehat{\mathrm{FDP}}_{\mathrm{D\text{-}StBH}}(s)
    = \frac{m\hat{\pi}_0 s}{\left(\sum_{i=1}^m \mathbb{I}(p_i^{\mathrm{std}} \le s)\right)\vee 1},
    \qquad
    \hat{\pi}_0
    = \frac{1 + \sum_{i=1}^m \mathbb{I}(\lambda < p_i^{\mathrm{std}} \le \tau)}{m(\tau - \lambda)}.
\end{equation*}
The data-adaptive threshold is then
\begin{equation*}
    \hat{s} = \max\{s \le \tau : 
    \widehat{\mathrm{FDP}}_{\mathrm{D\text{-}StBH}}(s) \le q\},
\end{equation*}
and all hypotheses with $p_i^{\mathrm{std}} \le \hat{s}$ are rejected.
The key distinction from the Storey--BH procedure lies in how the null proportion is estimated.  
D-StBH discards $p$-values larger than $\tau$ when estimating the null proportion, thereby reducing the influence of potentially conservative $p$-values. 
Under mutual independence and uniform validity of the null $p$-values, D-StBH controls the FDR in finite samples.

Another line of work, due to Zhao et al.\ \cite{zhao2019multiple} and Ellis et al.\ \cite{ellis2020gaining}, proposes conditional versions of standard multiple testing procedures. 
The central idea is based on the notion of uniform validity: if a $p$-value $p$ is uniformly valid, then for any threshold $\tau \in (0,1]$, the conditional quantity $p/\tau$ given $p \le \tau$ remains a valid $p$-value. Using this observation, one first restricts attention to the subset of $p$-values below the threshold,
\begin{equation*}
    \{\, p_i^{\mathrm{std}} : p_i^{\mathrm{std}} \le \tau \,\},
\end{equation*}
thereby excluding large $p$-values that are more likely to be conservative. 
The selected $p$-values are then rescaled to restore their support to $[0,1]$, yielding the conditional set
\begin{equation*}
    \mathcal{P}_\tau \coloneqq \{\, p_i^{\mathrm{std}}/\tau : p_i^{\mathrm{std}} \le \tau \,\}.
\end{equation*}
Since each element of $\mathcal{P}_\tau$ is valid under the null hypothesis and the $p$-values remain mutually independent after conditioning, one may apply the BH or Storey--BH procedure to $\mathcal{P}_\tau$ without compromising FDR control.

A key requirement for the methods discussed above is the uniform validity of the null $p$-values. Lemma~\ref{lemma1.1} shows that, under the hierarchical model in \eqref{hier_model}, the $p$-values defined in \eqref{conserv-p} are uniformly valid, and moreover uniformly conservative whenever the null mean distribution $M_0$ is not degenerate at zero. The proof is provided in Appendix~\ref{appen_a}.

\begin{lemma}\label{lemma1.1}
    Let $\{Z_i\}_{i=1}^m$ be generated from the hierarchical model \eqref{hier_model}. For every $i \in \mathcal{H}_0$, the $p$-value $p_i^{\mathrm{std}} = 1 - \Phi(Z_i)$ is uniformly valid for any choice of $M_0$. Furthermore, if the null distribution $M_0$ is not degenerate at zero, then $p_i$ is uniformly conservative.
\end{lemma}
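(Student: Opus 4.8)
The plan is to reduce the definition of uniform validity to a star-shapedness (monotone ratio) property of the marginal null CDF of the $p$-value, and then to verify that property by a direct computation with the Gaussian density. Write $G(t) := \mathbb{P}(p_i^{\mathrm{std}} \le t \mid i \in \mathcal{H}_0)$ for the marginal null CDF, so that $\mu_i \sim M_0$. Since $\tau u \le \tau$, one has $\mathbb{P}(p/\tau \le u \mid p \le \tau) = G(\tau u)/G(\tau)$, and uniform validity is exactly the statement $G(\tau u) \le u\, G(\tau)$ for all $\tau, u \in (0,1]$. Putting $s = \tau u$, this is equivalent to $G(s)/s \le G(\tau)/\tau$ whenever $s \le \tau$, i.e.\ to the claim that $t \mapsto G(t)/t$ is non-decreasing on $(0,1]$. (Here $G(\tau) > 0$ for $\tau > 0$, so the conditioning is well defined.)

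Next I would condition on the realized null mean. For fixed $\mu \le 0$, a short computation gives $G_\mu(t) := \mathbb{P}(1 - \Phi(Z_i) \le t \mid \mu_i = \mu) = \Phi(\mu + \Phi^{-1}(t))$, and since $t$ is non-random, $G(t)/t = \int [G_\mu(t)/t]\, dM_0(\mu)$. An average of non-decreasing functions is non-decreasing, so it suffices to show $G_\mu(t)/t$ is non-decreasing for each $\mu \le 0$. I would establish this by showing $G_\mu$ is convex on $(0,1)$ with $G_\mu(0^+) = 0$: the chord slope $G_\mu(t)/t = (G_\mu(t) - G_\mu(0))/t$ is then non-decreasing. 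For convexity, the change of variables $x = \Phi^{-1}(t)$ yields $G_\mu'(t) = \phi(\mu + x)/\phi(x) = e^{-\mu^2/2}\, e^{-\mu x}$, which for $\mu \le 0$ is non-decreasing in $x$, hence in $t$; thus $G_\mu$ is convex and $G(t)/t$ is non-decreasing, giving uniform validity.

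The main obstacle is really this last computation: the crux is that the Gaussian likelihood ratio $\phi(\mu + x)/\phi(x)$ is monotone in $x$ with its direction governed by the sign of $\mu$, which is precisely where the one-sided constraint $\mathrm{supp}(M_0) = (-\infty, 0]$ enters decisively. For uniform conservativeness I would combine star-shapedness with the strict inequality already noted in the text: for $\mu < 0$ one has $G_\mu(u) = \Phi(\mu + \Phi^{-1}(u)) < u$ for all $u \in (0,1)$. If $M_0$ is not degenerate at zero it assigns positive mass to $\{\mu < 0\}$, so integrating gives $G(u) = \int G_\mu(u)\, dM_0(\mu) < u$ on $(0,1)$; hence $p_i^{\mathrm{std}}$ is strictly conservative, and being also uniformly valid, it is uniformly conservative.
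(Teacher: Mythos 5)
Your proof is correct and follows essentially the same route as the paper's: both reduce uniform validity to convexity of the null CDF of the $p$-value and verify it via the monotonicity of the Gaussian likelihood ratio $\phi(\mu+x)/\phi(x)$ for $\mu\le 0$ (the paper phrases this as the MLR property of the marginal $f_0/\phi$, you condition on $\mu$ and compute explicitly). You additionally spell out two steps the paper only asserts — the equivalence of uniform validity with the star-shapedness of $G$, and the strict inequality $G(u)<u$ when $M_0$ charges $\{\mu<0\}$ — which is a welcome tightening rather than a departure.
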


\noindent Lemma~\ref{lemma1.1} establishes that, under the hierarchical model in \eqref{hier_model}, the procedures D-StBH and conditional BH/Storey–BH yield valid FDR control when applied to the $p$-values defined in \eqref{conserv-p}.

\subsection{Our Contributions}
Let $f_0(z) = \int \phi(z; \mu, 1)\, dM_0(\mu)$ and $f_1(z) = \int \phi(z; \mu, 1)\, dM_1(\mu)$,
where $\phi(\cdot;\mu,\sigma^2)$ denotes the density of a $N(\mu,\sigma^2)$ distribution.
The marginal density of $Z_i$ is therefore
\begin{equation*}
    f(z) \coloneqq \pi_0 f_0(z) + (1-\pi_0) f_1(z),
\end{equation*}
and we denote the associated distribution functions by $F_0$, $F_1$, and $F$, respectively.

In the oracle setting where the true null distribution $F_0$ is known, one may define
\begin{equation*}
    p_i^{\mathrm{oracle}} = 1 - F_0(Z_i), \qquad i \in [m],
\end{equation*}
which guarantees that $p_i^{\mathrm{oracle}}$ is exact for every $i \in \mathcal{H}_0$. Consequently, BH or Storey--BH can be applied directly, without any concern about power loss induced by conservative $p$-values.

In practice, however, the true null distribution $F_0$ is unknown. To approximate the oracle procedure, we propose to estimate the null prior $M_0$---and hence $F_0$---in a data-adaptive manner within an empirical Bayes framework. Given an estimate $\hat{F}_0$, we construct the $p$-values
\begin{equation*}
    p_i^{\mathrm{EB}} = 1 - \hat{F}_0(Z_i), \qquad i \in [m],
\end{equation*}
and then apply the BH or Storey--BH procedure to the collection $\{p_i^{\mathrm{EB}}\}_{i=1}^m$ to obtain a rejection set.

Our approach is fundamentally different from existing methods. Procedures such as D\text{-}StBH and conditional BH/Storey--BH mitigate the effect of conservative $p$-values by modifying the multiple testing procedure itself. 
In contrast, we address the problem at its source: by estimating the true null distribution $F_0$, we remove the source of conservativeness in the $p$-values, thereby allowing the BH or Storey--BH procedure to be applied in its original form without modification.

Our main contributions are summarized as follows.

\begin{itemize}
    \item We propose three candidate families for $M_0$:  
    (i) a Dirac delta distribution supported on $(-\infty,0]$;  
    (ii) a Gaussian distribution with mean $0$ and arbitrary variance, truncated to $(-\infty,0]$; and  
    (iii) a finite discrete distribution supported on $(-\infty,0]$.  
    The first serves as a simple baseline, the second provides a natural parametric model for Gaussian means, and the third yields a discretized approximation to a general null distribution.
    
    \item For each specification of the null prior $M_0$, we derive the corresponding null density $f_0$ and develop a data-driven estimation procedure tailored to its structure. A particularly interesting case arises when $M_0$ is modeled as a Gaussian distribution truncated above at zero: in this setting, the induced null density $f_0$ follows a skew-normal distribution with a negative shape parameter---a fact that is far from obvious.

    \item For the cases where the null prior is the Dirac delta distribution and the Gaussian distribution, we establish the consistency of the maximum likelihood estimate (MLE) of the corresponding null density parameter. Building on this result, we further show that the resulting empirical Bayes p-values asymptotically follow the uniform distribution under the null hypothesis. This demonstrates that the proposed empirical Bayes p-values asymptotically inherit the desirable property of the oracle p-values, namely asymptotic exactness under the null hypothesis.

    \item Extensive numerical experiments provide empirical evidence that the proposed method controls the FDR at the nominal level and achieves superior power whenever conservative p-values are present. In settings without conservative nulls, its empirical performance is essentially indistinguishable from that of existing procedures.
    
    \item Finally, we apply our method to the phosphoproteomics data set from a multi-omics study of high-grade serous ovarian cancer (TCGA-HGSC) \citep{zhang2016integrated}, and obtain results consistent with those from the simulation study. Our method yields the largest number of rejections among all compared methods. In particular, it identifies phosphorylation sites that were not detected by the existing methods.
\end{itemize}

\noindent \textbf{Outline.}
The remainder of this paper is organized as follows.
Section~\ref{sec:2} presents the proposed method, including the modeling of the null prior distribution $M_0$ and the estimation of the marginal null distribution $F_0$ tailored to different modeling assumptions.
Section~\ref{sec:3} investigates the finite-sample performance of the proposed method through extensive numerical experiments.
Section~\ref{sec:4} demonstrates the practical utility of the method via an analysis of a phosphoproteomics data set.
Finally, Section~\ref{sec:5} concludes with a discussion.

\section{Methods}\label{sec:2}
Since each $Z_i$ is generated from the hierarchical model in \eqref{hier_model}, $f_0$ belongs to the following class of densities:
\begin{equation*}
    \mathcal{F} \coloneqq \left\{ \int \phi(z ; \mu, 1) \, d\tilde{M}_0(\mu) : \tilde{M}_0 \text{ is a distribution with } \mathrm{supp}(\tilde{M}_0) \subseteq (-\infty, 0] \right\}.
\end{equation*}
A natural approach is to estimate the null density by searching within the class $\mathcal{F}$. However, solving the problem directly over $\mathcal{F}$ is too difficult. Therefore, we adopt the following strategy: we approximate $\mathcal{F}$ by a more tractable class obtained by restricting the distributions allowed for $\tilde{M}_0$, and then estimate the optimal null density within this restricted class based on the observed $Z_i$.

We consider three different specifications for modeling the prior distribution $M_0$. In Section~\ref{subsec_2.1}, we study the case of a Dirac delta distribution with a point mass on $(-\infty, 0]$. In Section~\ref{subsec_2.2}, we examine a Gaussian distribution with mean $0$ and arbitrary variance, truncated to $(-\infty, 0]$. In Section~\ref{subsec_2.3}, we analyze a finite discrete distribution supported on $(-\infty, 0]$. For each specification, we derive the corresponding null density and develop a tailored data-driven estimation procedure. Finally, in Section~\ref{subsec_2.4}, we illustrate our proposed multiple testing procedure based on the estimated null density.

\subsection{Restricting $\tilde{M}_0$ to Dirac Delta Distributions}\label{subsec_2.1} 
In this section, we restrict $\tilde{M}_0$ to a Dirac delta distribution with a point mass on the non-positive real line. This leads to the class
\begin{equation}
\begin{split}\label{class-F-Gaussian}
    \mathcal{F}_{\mathrm{Gaussian}} 
    &\coloneqq \left\{ \int \phi(z ; \mu, 1)\, d\tilde{M}_0(\mu) : 
    \tilde{M}_0 = \delta_{\mu_0},\; \mu_0 \leq 0 \right\}\\
    &\;= \left\{ \phi(z; \mu_0, 1) : \mu_0 \leq 0 \right\},
\end{split}
\end{equation}
where $\delta_x$ denotes the Dirac delta distribution with a point mass at $x$. 

To estimating $f_0$ within the class $\mathcal{F}_{\mathrm{Gaussian}}$, we impose the following assumptions.
\begin{itemize}
    \item (Gaussian Null Assumption) The true null density $f_0$ belongs to $\mathcal{F}_{\mathrm{Gaussian}}$, i.e., $f_0 \in \mathcal{F}_{\mathrm{Gaussian}}$.  
    
    \item[]
    \item (Zero Assumption, \cite{efron2004large}) For a constant $\xi \in \mathbb{R}$, all $Z_i$ with $Z_i \leq \xi$ are generated from the null distribution $F_0$, i.e., $Z_i \sim F_0$ if $Z_i \leq \xi$.  
\end{itemize}
Under these assumptions, the observations $\{Z_i : i \in \mathcal{S}_0 \coloneqq \{i : Z_i \leq \xi\}\}$ are i.i.d.\ samples from a Gaussian distribution with mean $\mu_0 \leq 0$ and unit variance, truncated to $(-\infty, \xi]$: 
\begin{equation*}
    Z_i \sim \frac{\phi(z-\mu_0)}{\Phi(\xi - \mu_0)} \, \mathbb{I}(z \leq \xi).
\end{equation*}

Based on this density, the likelihood function for $\mu_0$ is 
\begin{equation*}
    \mathcal{L}(\mu_0) 
    = \prod_{i \in \mathcal{S}_0} \frac{\phi(Z_i - \mu_0)}{\Phi(\xi - \mu_0)},
\end{equation*}
and the corresponding log-likelihood is 
\begin{equation*}
    \ell(\mu_0) 
    = \sum_{i \in \mathcal{S}_0} \log \phi(Z_i - \mu_0) 
      - |\mathcal{S}_0| \log \Phi(\xi - \mu_0),
\end{equation*}
where $|A|$ denotes the cardinality of the set $A$.
It is important to note that the log-likelihood $\ell(\mu_0)$ is strictly concave in $\mu_0$, which guarantees the uniqueness of the solution that maximizes the log-likelihood; that is, the MLE is unique. However, there is no closed-form expression for the MLE, and thus it must be obtained numerically.

We employ the Newton–Raphson algorithm to compute the MLE. Specifically, the first derivative of $\ell(\mu_0)$ is
\begin{equation*}
    \ell'(\mu_0) 
    = |\mathcal{S}_0|\left\{ (\bar{Z} - \mu_0) + R(\xi - \mu_0)\right\},
\end{equation*}
and the second derivative is
\begin{equation*}
    \ell''(\mu_0) 
    = -|\mathcal{S}_0|\left[ 1 + R(\xi - \mu_0)\big\{(\xi - \mu_0) + R(\xi - \mu_0)\big\}\right],
\end{equation*}
where $\bar{Z} = |\mathcal{S}_0|^{-1}\sum_{i \in \mathcal{S}_0} Z_i$ is the sample mean of the truncated data and $R(x) = \phi(x)/\Phi(x)$ denotes the Mills ratio. The Newton–Raphson update for $\mu_0$ is then given by
\begin{align*}
    \mu_0^{(t+1)} 
    &= \mu_0^{(t)} - \frac{\ell'(\mu_0^{(t)})}{\ell''(\mu_0^{(t)})} \\
    &= \mu_0^{(t)} - 
    \frac{\bar{Z} - \mu_0^{(t)} + R(\xi - \mu_0^{(t)})}
    {-\left[1 + R(\xi - \mu_0^{(t)})\{(\xi - \mu_0^{(t)}) + R(\xi - \mu_0^{(t)})\}\right]}.
\end{align*}
The iteration continues until either the maximum number of iterations $T$ is reached or the parameter update becomes smaller than a prespecified tolerance $\epsilon$,
i.e., $|\mu_0^{(t+1)} - \mu_0^{(t)}| < \epsilon$.

Let $\hat{\mu}_{0,\mathrm{NR}}$ denote the Newton--Raphson estimate of $\mu_0$. Since the parameter space is restricted to $\mu_0 \leq 0$ and the log-likelihood is strictly concave, the final estimator is given by
\begin{equation*}
    \hat{\mu}_0 = \min\{\hat{\mu}_{0,\mathrm{NR}},\,0\}.    
\end{equation*}
A concise summary of this estimation procedure is presented in Algorithm~\ref{alg:1}.

Theorem~\ref{theorem2.1} establishes the consistency of the MLE of $\mu_0$. The proof of Theorem~\ref{theorem2.1} is provided in Appendix~\ref{appen_c}.

\begin{theorem}\label{theorem2.1}
Let $Z_1,\dots,Z_n$ be i.i.d. random samples from the truncated normal density
\[
f_{\mu_0}^T(z)
=
\frac{\phi(z-\mu_0)}{\Phi(\xi-\mu_0)}\mathbb{I}(z\le\xi),
\]
where $\xi\in\mathbb{R}$ is known and $\mu_0$ is the true parameter. Let $\Theta\subset\mathbb{R}$ be a compact parameter space. Suppose that $\mu_0\in\Theta$. Define the log-likelihood criterion
\[
M_n(\mu)
=
\frac{1}{n}\sum_{i=1}^n\log f_\mu^T(Z_i),
\]
and let $\hat{\mu}_n\in\arg\max_{\mu\in\Theta}M_n(\mu)$ be the maximum likelihood estimator. Then,
\[
\hat{\mu}_n\xrightarrow{p}\mu_0 \quad \text{as } n\to\infty.
\]
\end{theorem}
Leveraging the consistency of the MLE of $\mu_0$ together with the Lipschitz continuity of the standard normal C.D.F., we obtain that $p_i^{\mathrm{EB}}$ converges in probability to $p_i^{\mathrm{oracle}}$ under the null hypothesis:
\begin{equation*}
\begin{split}
    |p_i^{\mathrm{EB}} - p_i^{\mathrm{oracle}}| 
    &= \left|\left(1-\Phi(Z_i -\hat{\mu}_n)\right) -  \left(1-\Phi(Z_i - \mu_0)\right)\right|\\
    &= |\Phi(Z_i - \mu_0) - \Phi(Z_i-\hat{\mu}_n)| \le \frac{1}{\sqrt{2\pi}}|\hat{\mu}_n - \mu_0| \xrightarrow{p} 0.
\end{split}
\end{equation*}
Since $p_i^{\mathrm{oracle}}$ follows a uniform distribution under the null hypothesis, it follows that $p_i^{\mathrm{EB}} \xrightarrow{d} \mathrm{Unif}(0,1)$.

\begin{algorithm}[!htb]
\caption{Estimation of $\mu_0$}
\label{alg:1}
\begin{algorithmic}[1]

\State \textbf{Input:} Observations $\{Z_i\}_{i=1}^m$; truncation point $\xi$; tolerance $\varepsilon>0$; maximum iterations $T$.
\State \textbf{Output:} $\hat{\mu}_0$

\State \textbf{Setup:} $\mathcal{S}_0 \gets \{i: Z_i \le \xi\}$,\quad
$\displaystyle \bar{Z} \gets |\mathcal{S}_0|^{-1}\sum_{i\in\mathcal{S}_0} Z_i$,\quad
$R(x) \gets \phi(x)/\Phi(x)$.
\State Initialize $\mu^{(0)} \gets \bar{Z}$ and set $t \gets 0$.

\While{$t < T$}
  \State $\displaystyle \ell' \gets |\mathcal{S}_0|\Big((\bar{Z}-\mu^{(t)}) + R(\xi-\mu^{(t)})\Big)$
  \State $\displaystyle \ell'' \gets -|\mathcal{S}_0|\Big(1 + R(\xi-\mu^{(t)})\{(\xi-\mu^{(t)}) + R(\xi-\mu^{(t)})\}\Big)$
  \State \textbf{Update:} $\mu^{(t+1)} \gets \mu^{(t)} - \ell'/\ell''$
  \If{$|\mu^{(t+1)}-\mu^{(t)}| < \varepsilon$}
     \State \textbf{break}
  \EndIf
  \State $t \gets t+1$
\EndWhile

\State Let $t^\star \gets \min\{t,\,T\}$ 
\State \textbf{Unconstrained estimate:} $\hat{\mu}_{0,\mathrm{NR}} \gets \mu^{(t^\star)}$
\State \textbf{Constrained estimate:} $\hat{\mu}_0 \gets \min\{\hat{\mu}_{0,\mathrm{NR}},\,0\}$

\end{algorithmic}
\end{algorithm}
\subsection{Restricting $\tilde{M}_0$ to Truncated Gaussian Distributions}\label{subsec_2.2}
In this section, we restrict $\tilde{M}_0$ to a Gaussian distribution with mean $0$ and variance $\sigma_0^2 > 0$, truncated to $(-\infty, 0]$. This leads to the class
\begin{equation}\label{class-F-SN}
    \mathcal{F}_{\mathrm{SN}} \coloneqq 
    \left\{ \int \phi(z ; \mu, 1)\, d\tilde{M}_0(\mu) :
    \tilde{M}_0 \sim N(0, \sigma_0^2)\ \text{truncated to } (-\infty, 0],\; \sigma_0 > 0 \right\}.
\end{equation}

Lemma~\ref{lemma2.1} implies that each element of $\mathcal{F}_{\mathrm{SN}}$ lies within the skew-normal family, thereby justifying the use of the subscript `SN' in $\mathcal{F}_{\mathrm{SN}}$. The proof of Lemma~\ref{lemma2.1} is provided in Appendix~\ref{appen_b}. 

\begin{lemma} \label{lemma2.1} 
    For a distribution $\tilde{M}_0$, consider the density
    \begin{equation*}
        f_0(z) = \int \phi(z ; \mu, \sigma^2) \, d\tilde{M}_0(\mu).
    \end{equation*}
    If $\tilde{M}_0$ is a Gaussian distribution with mean $\mu_0 \leq 0$ and variance $\sigma_0^2 > 0$ truncated to $(-\infty, 0]$, then $f_0$ can be written as
    \begin{equation} \label{extend_sn}
         f_0(z) 
         = \frac{1}{\omega \, \Phi(\zeta)} \,
           \phi\!\left( \frac{z-\mu_0}{\omega}\right) 
           \Phi\!\left( \alpha_0(\zeta) + \alpha \frac{z - \mu_0}{\omega} \right),
    \end{equation}
    where $\omega = \sqrt{\sigma^2+\sigma_0^2}, \; 
          \zeta = -\tfrac{\mu_0}{\sigma_0}, \; 
          \alpha = -\tfrac{\sigma_0}{\sigma}$, 
    and $\alpha_0(\zeta) = \zeta\sqrt{\alpha^2 + 1}$.
\end{lemma}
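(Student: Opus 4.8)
The plan is to evaluate the defining integral for $f_0$ directly, by completing the square in the integration variable $\mu$ so that the integrand factors into a Gaussian density in $z$ times a Gaussian density in $\mu$; integrating the latter over the half-line $(-\infty,0]$ then produces the $\Phi$ factor. First I would record the density of $\tilde M_0$. Since $\tilde M_0$ is $N(\mu_0,\sigma_0^2)$ truncated to $(-\infty,0]$, its density is $\Phi(\zeta)^{-1}\,\phi(\mu;\mu_0,\sigma_0^2)\,\mathbb{I}(\mu\le 0)$, where the normalizing constant $\Phi(\zeta)=\Phi(-\mu_0/\sigma_0)$ is exactly the mass that an untruncated $N(\mu_0,\sigma_0^2)$ variable places below zero. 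Substituting into the definition of $f_0$ gives
\[
f_0(z)=\frac{1}{\Phi(\zeta)}\int_{-\infty}^{0}\phi(z;\mu,\sigma^2)\,\phi(\mu;\mu_0,\sigma_0^2)\,d\mu ,
\]
where I have used the symmetry $\phi(z;\mu,\sigma^2)=\phi(\mu;z,\sigma^2)$ to view the first factor as a Gaussian in $\mu$ as well.

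Next I would apply the standard Gaussian product identity to the two factors, both read as densities in $\mu$: their product equals a Gaussian density in $z$ with mean $\mu_0$ and variance $\sigma^2+\sigma_0^2=\omega^2$, multiplied by a Gaussian density in $\mu$ with variance $\tau^2=\sigma^2\sigma_0^2/\omega^2$ and mean $c(z)=\tau^2(z/\sigma^2+\mu_0/\sigma_0^2)=(\sigma_0^2 z+\sigma^2\mu_0)/\omega^2$. That is, $\phi(z;\mu,\sigma^2)\phi(\mu;\mu_0,\sigma_0^2)=\phi(z;\mu_0,\omega^2)\,\phi(\mu;c(z),\tau^2)$. The $z$-factor $\phi(z;\mu_0,\omega^2)=\omega^{-1}\phi\bigl((z-\mu_0)/\omega\bigr)$ does not depend on $\mu$ and pulls out of the integral, while integrating the remaining Gaussian in $\mu$ over $(-\infty,0]$ yields $\Phi\bigl(-c(z)/\tau\bigr)$. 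This already delivers the claimed product structure
\[
f_0(z)=\frac{1}{\omega\,\Phi(\zeta)}\,\phi\!\Bigl(\tfrac{z-\mu_0}{\omega}\Bigr)\,\Phi\!\Bigl(-\tfrac{c(z)}{\tau}\Bigr).
\]

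The only remaining task is the purely algebraic verification that the argument $-c(z)/\tau$ coincides with $\alpha_0(\zeta)+\alpha(z-\mu_0)/\omega$. Using $\tau=\sigma\sigma_0/\omega$ one computes $-c(z)/\tau=-(\sigma_0^2 z+\sigma^2\mu_0)/(\sigma\sigma_0\omega)$; expanding the target with $\alpha=-\sigma_0/\sigma$, the identity $\sqrt{\alpha^2+1}=\omega/\sigma$ (from $\alpha^2+1=\omega^2/\sigma^2$), and $\zeta=-\mu_0/\sigma_0$, then clearing denominators and invoking $\omega^2=\sigma^2+\sigma_0^2$ shows both sides equal $-(\sigma_0^2 z+\sigma^2\mu_0)/(\sigma\sigma_0\omega)$, the $\sigma_0^2\mu_0$ contributions cancelling. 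I expect this bookkeeping — tracking $\sqrt{\alpha^2+1}=\omega/\sigma$ and the cancellation enabled by $\omega^2=\sigma^2+\sigma_0^2$ — to be the only genuine obstacle; the completing-the-square step is routine and the half-line integration is immediate once the $\mu$-Gaussian is isolated.
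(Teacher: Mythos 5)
Your proposal is correct and follows essentially the same route as the paper's proof: both complete the square in $\mu$ to factor the integrand into a $N(\mu_0,\omega^2)$ density in $z$ times a Gaussian in $\mu$, integrate the latter over $(-\infty,0]$ to produce the $\Phi$ factor, and then verify algebraically that its argument equals $\alpha_0(\zeta)+\alpha(z-\mu_0)/\omega$. The only cosmetic differences are that you invoke the standard Gaussian product (conjugacy) identity rather than writing out the square-completion explicitly, and you work with general $\sigma^2$ where the paper carries out the computation for $\sigma^2=1$ and notes the general case is analogous.
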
 

The distribution in \eqref{extend_sn} can be viewed as an extended version of the skew-normal distribution \cite{azzalini1985class}. When $\mu_0 = 0$, we have $\zeta = 0$ and consequently $\alpha_0(\zeta) = 0$. In this case, the null density $f_0$ simplifies to
\begin{equation*}
    f_0(z) = \frac{1}{2\sqrt{\sigma^2 + \sigma_0^2}} \, 
             \phi\!\left( \frac{z}{\sqrt{\sigma^2 + \sigma_0^2}} \right) 
             \Phi\!\left( -\frac{\sigma_0}{\sigma} \cdot 
             \frac{z}{\sqrt{\sigma^2 + \sigma_0^2}} \right),
\end{equation*}
which corresponds to a skew-normal distribution with location parameter $0$, scale parameter $\sqrt{\sigma^2 + \sigma_0^2}$, and shape parameter $-\sigma_0/\sigma$. 
From this observation, the class $\mathcal{F}_{\mathrm{SN}}$ can be rewritten as
\begin{equation*}
    \mathcal{F}_{\mathrm{SN}} 
    = \left\{ \frac{1}{2\sqrt{1 + \sigma_0^2}} \, 
             \phi\!\left( \frac{z}{\sqrt{1 + \sigma_0^2}} \right) 
             \Phi\!\left( -\sigma_0 \cdot 
             \frac{z}{\sqrt{1 + \sigma_0^2}} \right) : \sigma_0 > 0 \right\}.
\end{equation*}
It is worth noting that the shape parameter $-\sigma_0$ is always negative, indicating that the corresponding density is left-skewed.

We now turn to estimating $f_0$ within the class $\mathcal{F}_{\mathrm{SN}}$. To this end, we replace the Gaussian null assumption with the following:
\begin{itemize}
    \item (Skew\textnormal{-}Normal Null Assumption) The true null density $f_0$ belongs to $\mathcal{F}_{\mathrm{SN}}$, i.e., $f_0 \in \mathcal{F}_{\mathrm{SN}}$.
\end{itemize}
In conjunction with the zero assumption, the observations $\{Z_i : i \in \mathcal{S}_0\}$ are i.i.d.\ samples from a skew-normal distribution with location parameter $0$, scale parameter $\sqrt{1+\sigma_0^2}$, and shape parameter $-\sigma_0$, truncated to $(-\infty,\xi]$:
\begin{equation*}
    Z_i \sim 
    \frac{f_{\mathrm{SN}}(z; 0, \sqrt{1+\sigma_0^2}, -\sigma_0)}
         {F_{\mathrm{SN}}(\xi; 0, \sqrt{1+\sigma_0^2}, -\sigma_0)} \,
         \mathbb{I}(z \le \xi),
\end{equation*}
where $f_{\mathrm{SN}}(\cdot;\mu,\omega,\alpha)$ and $F_{\mathrm{SN}}(\cdot;\mu,\omega,\alpha)$ denote the density and distribution functions of a skew-normal random variable with location $\mu$, scale $\omega$, and shape $\alpha$.
Accordingly, the likelihood for $\sigma_0$ is
\begin{equation*}
    \mathcal{L}(\sigma_0)
    = \prod_{i \in \mathcal{S}_0}
      \frac{f_{\mathrm{SN}}(z_i; 0, \sqrt{1+\sigma_0^2}, -\sigma_0)}
           {F_{\mathrm{SN}}(\xi; 0, \sqrt{1+\sigma_0^2}, -\sigma_0)},
\end{equation*}
and the log-likelihood is
\begin{equation*}
    \ell(\sigma_0)
    = \sum_{i \in \mathcal{S}_0}
      \log f_{\mathrm{SN}}\left(z_i; 0, \sqrt{1+\sigma_0^2}, -\sigma_0\right)
      \;-\; |\mathcal{S}_0| \log F_{\mathrm{SN}}\left(\xi; 0, \sqrt{1+\sigma_0^2}, -\sigma_0\right).
\end{equation*}
Because differentiating $F_{\mathrm{SN}}$ is analytically intractable, we avoid derivative-based methods and instead maximize the log-likelihood using Brent’s method \citep{brent2013algorithms}, as implemented in the \texttt{optimize()} function in \textsf{R}. To enforce $\sigma_0>0$, we reparameterize $\sigma_0=\exp(\eta)$ and optimize over $\eta\in\mathbb{R}$, then recover the estimate as $\hat{\sigma}_0=\exp(\hat{\eta})$. The procedure is summarized in Algorithm~\ref{alg:2}.

Theorem~\ref{theorem2.3} establishes the consistency of the MLE of $\sigma_0$. The proof is given in Appendix~\ref{appen_c}.

\begin{theorem}\label{theorem2.3}
Let $Z_1,\dots,Z_n$ be i.i.d. random samples from the right-truncated skew-normal density
\[
f_{\sigma_0}^T(z)
=
\frac{f_{\sigma_0}(z)}{F_{\sigma_0}(\xi)}\mathbb{I}(z<\xi),
\]
where $\xi\in\mathbb{R}$ is known, $\sigma_0>0$ is the true parameter, and
\[
f_\sigma(z)
=
\frac{2}{\sqrt{1+\sigma^2}}
\phi\!\left(\frac{z}{\sqrt{1+\sigma^2}}\right)
\Phi\!\left(-\frac{\sigma z}{\sqrt{1+\sigma^2}}\right),
\qquad \sigma>0,
\]
and $F_{\sigma}$ denotes the C.D.F. corresponding to $f_{\sigma}$.
Let $\Theta\subset(0,\infty)$ be a compact parameter space. Suppose that $\sigma_0\in\Theta$. Define the log-likelihood criterion
\[
M_n(\sigma)
=
\frac{1}{n}\sum_{i=1}^n\log f_\sigma^T(Z_i),
\quad \sigma\in\Theta,
\]
and let $\hat{\sigma}_n\in\arg\max_{\sigma\in\Theta}M_n(\sigma)$ be the maximum likelihood estimator. Then,
\[
\hat{\sigma}_n\xrightarrow{p}\sigma_0\quad\text{as }n\to\infty.
\]
\end{theorem}

Leveraging the consistency of the MLE of $\sigma_0$, we can establish that $p_i^{\mathrm{EB}}$ converges in probability to $p_i^{\mathrm{oracle}}$ under the null hypothesis:
\begin{equation*}
\begin{split}
    &|p_i^{\mathrm{EB}}-p_i^{\mathrm{oracle}}| 
    = \left| \left(1 - F_{\mathrm{SN}}(Z_i; 0, \sqrt{1+\hat{\sigma}_n^2}, -\hat{\sigma}_n)\right) - \left(1-F_{\mathrm{SN}}(Z_i; 0, \sqrt{1+\sigma_0^2}, -\sigma_0)\right)\right|  \\
    &\quad=\left|F_{\mathrm{SN}}(Z_i; 0, \sqrt{1+\sigma_0^2}, -\sigma_0) - F_{\mathrm{SN}}(Z_i; 0, \sqrt{1+\hat{\sigma}_n^2}, -\hat{\sigma}_n)\right| 
    \leq C|\hat{\sigma}_n - \sigma_0| \xrightarrow{p} 0,
\end{split}
\end{equation*}
where $C$ is a constant depending only on the compact set $\Theta$, and the inequality follows from an intermediate result established in the proof of Lemma~\ref{lemmaC.3} in Appendix~\ref{appen_c}. Since $p_i^{\mathrm{oracle}}$ follows the uniform distribution under the null hypothesis, it follows that $p_i^{\mathrm{EB}} \xrightarrow{d} \mathrm{Unif}(0,1)$.

\begin{algorithm}[!htb]
\caption{Estimation of $\sigma_0$}
\label{alg:2}
\begin{algorithmic}[1]

\State \textbf{Input:} Observations $\{Z_i\}_{i = 1}^m$; truncation point $\xi$; search bounds $\eta_{\min}<\eta_{\max}$.
\State \textbf{Output:} $\hat{\sigma}_0$

\State \textbf{Setup:} $\mathcal{S}_0 \gets \{i : Z_i \le \xi\}$.
\State \textbf{Reparameterization:} $\sigma_0 = \exp(\eta)$ to enforce $\sigma_0>0$.
\State \textbf{Define} the log-likelihood function:
\begin{align*}
\ell(\eta)
&= \sum_{i \in \mathcal{S}_0} 
   \log f_{\mathrm{SN}}\!\left(z_i; 0, \sqrt{1+\exp(2\eta)}, -\exp(\eta)\right) \\
&\quad - |\mathcal{S}_0| 
   \log F_{\mathrm{SN}}\!\left(\xi; 0, \sqrt{1+\exp(2\eta)}, -\exp(\eta)\right).
\end{align*}

\State \textbf{Optimization:} Maximize $\ell(\eta)$ over $[\eta_{\min}, \eta_{\max}]$ 
using Brent’s method as implemented in R’s \texttt{optimize()} function.
\State Let $\hat{\eta}$ denote the maximizer.
\State \textbf{Back-transform:} $\hat{\sigma}_0 \gets \exp(\hat{\eta})$.
\State \Return $\hat{\sigma}_0$

\end{algorithmic}
\end{algorithm}
\subsection{Restricting $\tilde{M}_0$ to Finite Discrete Distribution}\label{subsec_2.3}
Finally, we restrict $\tilde{M}_0$ to be a finite discrete distribution.  
Specifically, for the prespecified grid points $\mu_k \in (-\infty, 0]$,  
$k = 1, 2, \ldots, K$, the null prior distribution $\tilde{M}_0$ is defined as
\begin{equation}\label{finite_disc_mix}
    \tilde{M}_0 = \sum_{k=1}^K p_k \, \delta_{\mu_k},
\end{equation}
where $p_k$ denotes the probability mass assigned to the grid point $\mu_k$.  
Here, both the number of grid points $K$ and their locations $\{\mu_k\}_{k = 1}^K$ are prespecified, so the only unknown parameters are the weights $\{p_k\}_{k=1}^K$.

The null prior distribution in \eqref{finite_disc_mix} induces the following class of finite Gaussian mixture (FGM) densities:
\begin{equation}
\begin{split}\label{class-F-FGM}
    \mathcal{F}_{\mathrm{FGM}}
    &\equiv 
    \mathcal{F}_{\mathrm{FGM}}\left(\{\mu_k\}_{k=1}^K\right) \\
    &\coloneqq
    \left\{
        \int \phi(z ; \mu, 1)\, d\tilde{M}_0(\mu)
        : 
        \tilde{M}_0 = \sum_{k=1}^K p_k \delta_{\mu_k},\;
        \sum_{k=1}^K p_k = 1,\;
        p_k \ge 0
    \right\} \\
    &=
    \left\{
        \sum_{k=1}^K p_k \, \phi(z; \mu_k, 1)
        :
        \sum_{k=1}^K p_k = 1,\;
        p_k \ge 0
    \right\}.
\end{split}
\end{equation}

To estimate $f_0$ within the class $\mathcal{F}_{\mathrm{FGM}}$, we adopt the finite Gaussian mixture null assumption:
\begin{itemize}
    \item (Finite Gaussian Mixture Null Assumption) The true null density $f_0$ belongs to $\mathcal{F}_{\mathrm{FGM}}$, i.e., $f_0 \in \mathcal{F}_{\mathrm{FGM}}$.  
\end{itemize}

\noindent Under this assumption and the zero assumption, the observations  
$\{Z_i : i \in \mathcal{S}_0\}$ are i.i.d.\ from a right-truncated finite Gaussian mixture:
\begin{equation*}
\begin{split}
    Z_i 
    &\sim 
    \frac{\sum_{k = 1}^K p_k \, \phi(z; \mu_k, 1)}
         {\sum_{k = 1}^K p_k \, \Phi(\xi - \mu_k)} \, \mathbb{I}(z \leq \xi) \\
    &= \sum_{k = 1}^K 
       \underbrace{\frac{p_k \, \Phi(\xi - \mu_k)}
       {\sum_{j = 1}^K p_j \, \Phi(\xi - \mu_j)}}_{\displaystyle \eqqcolon \eta_k}
       \cdot 
       \frac{\phi(z; \mu_k, 1)}{\Phi(\xi - \mu_k)} \, \mathbb{I}(z \leq \xi).
\end{split}
\end{equation*}

\noindent The expression above naturally introduces the reparameterization
\[
    \eta_k 
    = \frac{p_k \, \Phi(\xi - \mu_k)}
           {\sum_{j=1}^K p_j \, \Phi(\xi - \mu_j)},
    \qquad k = 1,\ldots,K,
\]
which satisfies $\eta_k \ge 0$ and $\sum_{k=1}^K \eta_k = 1$.  
Hence, the truncated mixture distribution can be viewed as a convex combination of truncated Gaussian densities with mixing proportions $\boldsymbol{\eta}=(\eta_1,\ldots,\eta_K)$.

Since the likelihood depends on $\boldsymbol{p}$ only through the normalized quantities $\boldsymbol{\eta}$, it is convenient to formulate the estimation problem in terms of $\boldsymbol{\eta}$.  
The likelihood and log-likelihood functions can be written as
\begin{equation*}
    \mathcal{L}(\boldsymbol{\eta}) 
    = \prod_{i \in \mathcal{S}_0} 
      \left\{\sum_{k = 1}^K \eta_k \, \frac{\phi(Z_i; \mu_k, 1)}{\Phi(\xi - \mu_k)}\right\},
    \qquad
    \ell(\boldsymbol{\eta}) 
    = \sum_{i \in \mathcal{S}_0} 
      \log \left\{\sum_{k = 1}^K \eta_k \, \frac{\phi(Z_i; \mu_k, 1)}{\Phi(\xi - \mu_k)}\right\}.
\end{equation*}

\noindent A key advantage of this reparameterization is that the log-likelihood is convex in $\boldsymbol{\eta}$.  
Therefore, our estimation proceeds in two steps:  
(i) maximize the convex log-likelihood with respect to $\boldsymbol{\eta}$ using a sequential quadratic programming approach \citep{kim2022semi} to obtain $\hat{\boldsymbol{\eta}}$;  
(ii) recover the original mixing proportions $\hat{\boldsymbol{p}}$ by inverting the mapping from $\boldsymbol{p}$ to $\boldsymbol{\eta}$ at $\hat{\boldsymbol{\eta}}$.  
The complete procedure is summarized in Algorithm~\ref{alg:3}.

\begin{algorithm}[!htb]
\caption{Estimation of $\boldsymbol{p}$}
\label{alg:3}
\begin{algorithmic}[1]

\State \textbf{Input:} Observations $\{Z_i\}_{i=1}^m$; truncation point $\xi$; number of components $K$; support points $\{\mu_k\}_{k=1}^K$ with $\mu_k \leq 0$.
\State \textbf{Output:} $\hat{\boldsymbol{p}} = (\hat{p}_1,\ldots,\hat{p}_K)$

\State \textbf{Setup:} Define the truncated index set $\mathcal{S}_0 \gets \{i: Z_i \leq \xi\}$.  
Reparameterize the mixing proportions by
\[
\eta_k = \frac{p_k \, \Phi(\xi - \mu_k)}{\sum_{j=1}^K p_j \, \Phi(\xi - \mu_j)}, 
\quad k=1,\ldots,K,
\]
so that $\boldsymbol{\eta} = (\eta_1,\ldots,\eta_K)$ lies in the $(K-1)$-dimensional simplex.

\State \textbf{Log-likelihood:}
\[
\ell(\boldsymbol{\eta}) 
= \sum_{i \in \mathcal{S}_0} 
  \log \left\{ \sum_{k=1}^K \eta_k \, 
  \frac{\phi(Z_i;\mu_k,1)}{\Phi(\xi-\mu_k)} \right\}.
\]

\State \textbf{Optimization:} Maximize $\ell(\boldsymbol{\eta})$ over the simplex using sequential quadratic programming \citep{kim2020fast}, obtaining $\hat{\boldsymbol{\eta}}$.

\State \textbf{Back-transformation:}  
For each $k=1,\ldots,K$, compute
\[
\tilde{p}_k \gets \frac{\hat{\eta}_k}{\Phi(\xi-\mu_k)}.
\]
Normalize:
\[
\hat{p}_k \gets \frac{\tilde{p}_k}{\sum_{j=1}^K \tilde{p}_j}, 
\quad k=1,\ldots,K.
\]

\State \Return $\hat{\boldsymbol{p}}$.

\end{algorithmic}
\end{algorithm}

\subsection{Multiple Testing Procedure}\label{subsec_2.4}
Up to this point, we have described how to estimate the null density $f_0$ under the three modeling assumptions introduced in \eqref{class-F-Gaussian}, \eqref{class-F-SN}, and \eqref{class-F-FGM}. 
Given an estimated null density $\hat{f}_0$, we obtain the corresponding null distribution $\hat{F}_0$ and compute $p$-values via
\begin{equation}\label{est-p}
    p_i^{\mathrm{EB}} = 1 - \hat{F}_0(Z_i).
\end{equation}
If $\hat{F}_0$ provides a good approximation to the true null distribution $F_0$, then the resulting $p$-values in \eqref{est-p} are approximately exact.

In practice, however, it is unclear which of the three null modeling assumptions yields the most accurate estimate of the true null density. 
To address this, we first obtain three candidate estimates of $f_0$, each computed using Algorithms~\ref{alg:1}--\ref{alg:3}. 
We then compare these candidates by evaluating their likelihoods on the truncated sample $\{Z_i : i \in \mathcal{S}_0(\xi)\}$, and select as $\hat{f}_0$ the one that achieves the largest value. 
Using this selected null density, we compute $p$-values according to \eqref{est-p}, and subsequently apply a multiple testing procedure—such as the BH or Storey--BH method—to control the FDR. 
The full procedure is summarized in Algorithm~\ref{alg:4}.

\begin{algorithm}[!htb]
\caption{Proposed multiple testing procedure}
\label{alg:4}
\begin{algorithmic}[1]
\State \textbf{Input:} Observed statistics $\{Z_i\}_{i=1}^m$; truncation point $\xi$; target FDR level $q$; algorithm-specific hyperparameters (Algorithms~1--3).
\State \textbf{Output:} Rejection set $\mathcal{R}$.

\State \textbf{Step 1. Estimate candidate null densities.}
\State Run the following estimation algorithms on the truncated sample $\{Z_i : Z_i \le \xi\}$:
\begin{itemize}
    \item Algorithm~\ref{alg:1}: obtain $\hat{f}_0^{\mathrm{G}}$ under the Gaussian null assumption;
    \item Algorithm~\ref{alg:2}: obtain $\hat{f}_0^{\mathrm{SN}}$ under the skew-normal null assumption;
    \item Algorithm~\ref{alg:3}: obtain $\hat{f}_0^{\mathrm{FGM}}$ under the finite Gaussian mixture null assumption.
\end{itemize}

\State \textbf{Step 2. Model selection.}
\State Select the null density estimate
\[
   \hat{f}_0^\ast \in 
   \bigl\{
       \hat{f}_0^{\mathrm{G}},\;
       \hat{f}_0^{\mathrm{SN}},\;
       \hat{f}_0^{\mathrm{FGM}}
   \bigr\}
\]
as the one achieving the largest value among the log-likelihoods evaluated on the truncated sample.

\State \textbf{Step 3. $p$-value computation.}
\For{$i=1,\ldots,m$}
    \State Compute
    \[
       p_i^{\mathrm{EB}}= 1 - \hat{F}_0^\ast(Z_i),
    \]
    where $\hat{F}_0^\ast$ is the c.d.f.\ associated with $\hat{f}_0^\ast$.
\EndFor

\State \textbf{Step 4. Multiple testing.}
\State Apply a multiple testing procedure (e.g., BH or Storey--BH) to the $p$-values $\{p_i^{\mathrm{EB}}\}_{i=1}^m$ at level $q$ to obtain the rejection set $\mathcal{R}$.

\State \Return Rejection set $\mathcal{R}$.
\end{algorithmic}
\end{algorithm}

\begin{remark}
Although our proposed method may look like it fits each model separately and then selects the one with the largest likelihood, the underlying idea is unified: it effectively performs likelihood maximization over $\mathcal{F}_{\mathrm{Gaussian}} \cup \mathcal{F}_{\mathrm{SN}} \cup \mathcal{F}_{\mathrm{FGM}}$. Since each class has a different structure, we use class-specific estimation algorithms, which makes the implementation appear as ``optimize within each class and compare". Conceptually, however, the method amounts to a single optimization over the union of these classes.
\end{remark}

\section{Numerical Study}\label{sec:3}
In this section, we demonstrate our proposed method and compare its performance with existing methods. Section~\ref{subsec_3.1} illustrates the simulation settings considered in this paper. Section~\ref{subsec_3.2} provides a brief summary of the methods for comparison, 
and Section~\ref{subsec_3.3} presents the simulation results.

\subsection{Simulation Settings}\label{subsec_3.1}
Unless otherwise stated, we fix the global simulation settings as follows: the number of hypotheses is set to $m=5{,}000$, the null proportion to $\pi_0 = 0.9$, and the target FDR level to $q=0.1$. For the non-null prior $M_1$, we take a point mass at $3$ so that the corresponding non-null density $f_1$ is the $N(3,1)$ distribution.

We consider two distinct specifications of the null prior distribution $M_0$.
The first specification is a two-point mixture,
\begin{equation}\label{two-comp-mix}
    M_0(\mu)
    = \rho\,\delta_{-1}(\mu) + (1-\rho)\,\delta_{0}(\mu),
    \qquad \rho \in [0,1],
\end{equation}
where the mixing proportion $\rho$ determines the fraction of null effects taking the value $-1$.
When $\rho = 0$, the induced null density $f_0$ coincides with the standard normal density, and hence
$p_i^{\mathrm{std}}$ is exact for all $i \in \mathcal{H}_0$.
For any $\rho > 0$, however, $p_i^{\mathrm{std}}$ is conservative for every $i \in \mathcal{H}_0$, with the degree of conservativeness increasing monotonically in $\rho$.
As a second specification, we consider a left-truncated Gaussian prior,
\begin{equation}\label{truncated-normal}
    M_0(\mu)
    = \frac{\phi(\mu; 0, \sigma_0^2)}{1/2}\,\mathbb{I}(\mu \le 0),
    \qquad \sigma_0 > 0,
\end{equation}
for which Lemma~\ref{lemma2.1} shows that the induced null density $f_0$ is a skew-normal distribution 
with location $0$, scale $\sqrt{1+\sigma_0^2}$, and shape parameter $-\sigma_0$. As $\sigma_0^2$ increases, the density $f_0$ becomes more left-skewed, leading to null $p$-values with a higher degree of conservativeness.

We evaluate type-I error and power using the false discovery rate (FDR) and the true positive rate (TPR):
\begin{align*}
    \mathrm{FDR}
    &=\mathbb E[\mathrm{FDP}], \qquad \mathrm{FDP}=\frac{|\mathcal H_0\cap\mathcal R|}{|\mathcal R|\vee 1},\\
    \mathrm{TPR}
    &=\mathbb E[\mathrm{TPP}],\qquad \mathrm{TPP}=\frac{|\mathcal H_1\cap\mathcal R|}{|\mathcal H_1|\vee 1},
\end{align*}
where $a\vee b=\max\{a,b\}$ and $\mathcal R$ denotes the rejection set.

\subsection{Methods for Comparison}\label{subsec_3.2}
We compare the following four methods:
\begin{itemize}
    \item \textbf{Storey--BH (StBH)} \citep{storey2002direct}:  
This method is a naive approach that ignores the presence of conservative $p$-values: it computes the $p$-values as in \eqref{conserv-p} and subsequently applies the Storey--BH procedure. 
  
    \item \textbf{Conditional Storey--BH (C--StBH)} \citep{zhao2019multiple, ellis2020gaining}:  
    This method employs conditional $p$-values defined as 
    $\{p_i/\tau : i \in \{j : p_j \leq \tau\}\}$, with a pre-specified threshold $\tau$. Throughout our simulations, we set $\tau=0.5$ and apply the Storey--BH procedure to these conditional $p$-values.
    
    \item \textbf{D--StBH} \citep{tian2019addis}:  
    This method employs a discarding rule when estimating the null proportion $\pi_0$. 
    For given constants $0 \leq \lambda < \tau \leq 1$, $\pi_0$ is estimated as
    \begin{equation*}
        \hat{\pi}_0 = \frac{1 + \sum_{i=1}^m \mathbb{I}(\lambda < p_i \leq \tau)}{m(\tau-\lambda)}.
    \end{equation*}
    Following the recommendation of \cite{tian2019addis}, we set $\lambda=0.25$ and $\tau=0.5$ in all simulation scenarios. 
        
    \item \textbf{Proposed Method}:  
    This is our proposed method described in Section~\ref{sec:2}. We employ the Storey--BH procedure for multiplicity correction. Additional implementation details, such as the selection of the truncation point $\xi$ and algorithm-specific hyperparameters, are provided in Appendix~\ref{appen_d}.
\end{itemize}
Since D--StBH is an extension of the Storey--BH procedure, we apply the Storey--BH procedure to all other methods as well to ensure a fair comparison. 

\subsection{Results}\label{subsec_3.3}
\subsubsection{Two-Point Mixture Distribution for $M_0$}
In this section, we consider the two-point mixture distribution described in \eqref{two-comp-mix} as the prior distribution $M_0$. We vary the mixing proportion $\rho$ from 0 to 1 in increments of 0.2. 

Figure~\ref{fig:1} presents the empirical FDR and TPR across different values of $\rho$, each computed by averaging over 1{,}000 replications.
We make the following observations. 
First, all methods control the FDR at the nominal level; our proposed method maintains the FDR tightly at the target level across all values of $\rho$, whereas the other methods become increasingly conservative as $\rho$ approaches 1. 
Second, in terms of TPR, our proposed method consistently outperforms the competing methods whenever $\rho>0$. Moreover, the performance gap widens as $\rho$ increases; in particular, when $\rho=1$, our method achieves substantially higher power than all other methods. Even when $\rho=0$, its power remains comparable to the alternatives. 
Finally, the power of the StBH procedure decreases monotonically as $\rho$ approaches 1. 
This behavior is somewhat counterintuitive, since signal detection would be expected to become easier as $\rho$ increases. This power loss arises because the StBH procedure fails to account for the increasing conservativeness of the null $p$-values.

\begin{figure}[!htb]
    \centering
    \includegraphics[width=0.9\linewidth]{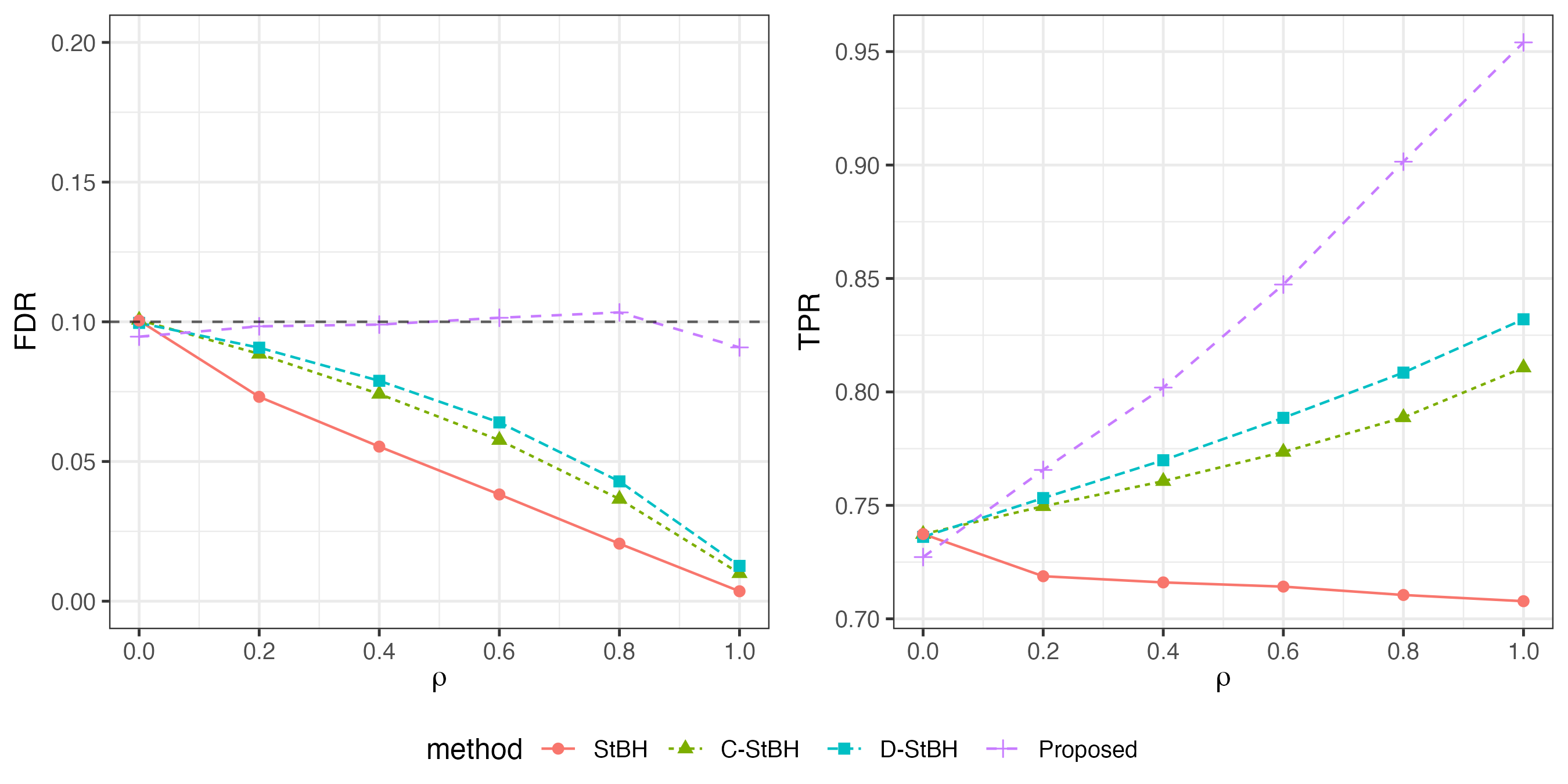}
    \caption{Empirical FDR and TPR averaged over 1,000 replications under the two-point mixture null distribution $M_0$ with varying mixing proportion $\rho$. The dashed line in the left panel indicates the target FDR level of 0.1.}
    \label{fig:1}
\end{figure}

Figure~\ref{fig:2} presents histograms of the $p$-values computed under the standard Gaussian null, $\{p_i^{\mathrm{std}}\}_{i=1}^{5{,}000}$, and under our estimated null distribution, $\{p_i^{\mathrm{EB}}\}_{i=1}^{5{,}000}$, when $\rho=1$. The histogram based on our estimate shows a right tail that is close to uniformly distributed, whereas the histogram under the standard Gaussian null exhibits a marked deviation from uniformity. This demonstrates that our method accurately captures the true null distribution and produces well-calibrated $p$-values.

\begin{figure}[!htb]
    \centering
    \includegraphics[width=0.9\linewidth]{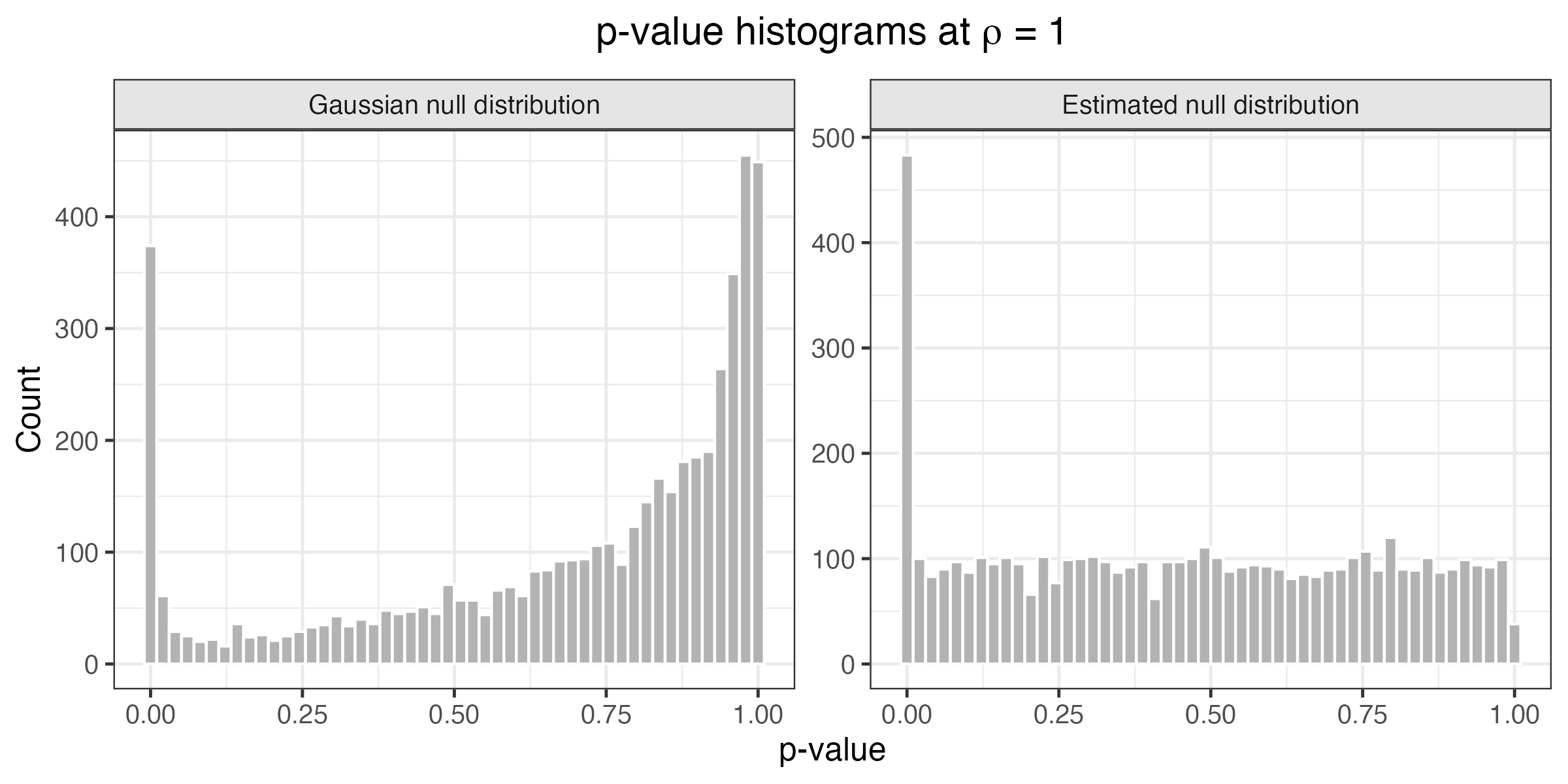}
    \caption{Histograms of $p$-values computed under the standard Gaussian null distribution (left panel) and under our estimated null distribution (right panel) at mixing proportion $\rho = 1$.
}
    \label{fig:2}
\end{figure}
\subsubsection{Truncated Gaussian Distribution for $M_0$}
In this section, we take the truncated Gaussian distribution in \eqref{truncated-normal} as the prior $M_0$ and vary $\sigma_0$ from 1 to 2 in increments of 0.2. 

Figure~\ref{fig:3} reports the empirical FDR and TPR across different values of $\sigma_0$, each computed by averaging over 1{,}000 replications. The overall pattern is similar to Figure~\ref{fig:1}: all methods maintain FDR control at the target level, while the proposed method consistently achieves the highest power across all values of $\sigma_0$, followed by D–StBH, C–StBH, and StBH.  The advantage of the proposed method over StBH is especially pronounced when $\sigma_0=2$, which represents the most conservative scenario among all cases considered. 

\begin{figure}[!htb]
    \centering
    \includegraphics[width=0.9\linewidth]{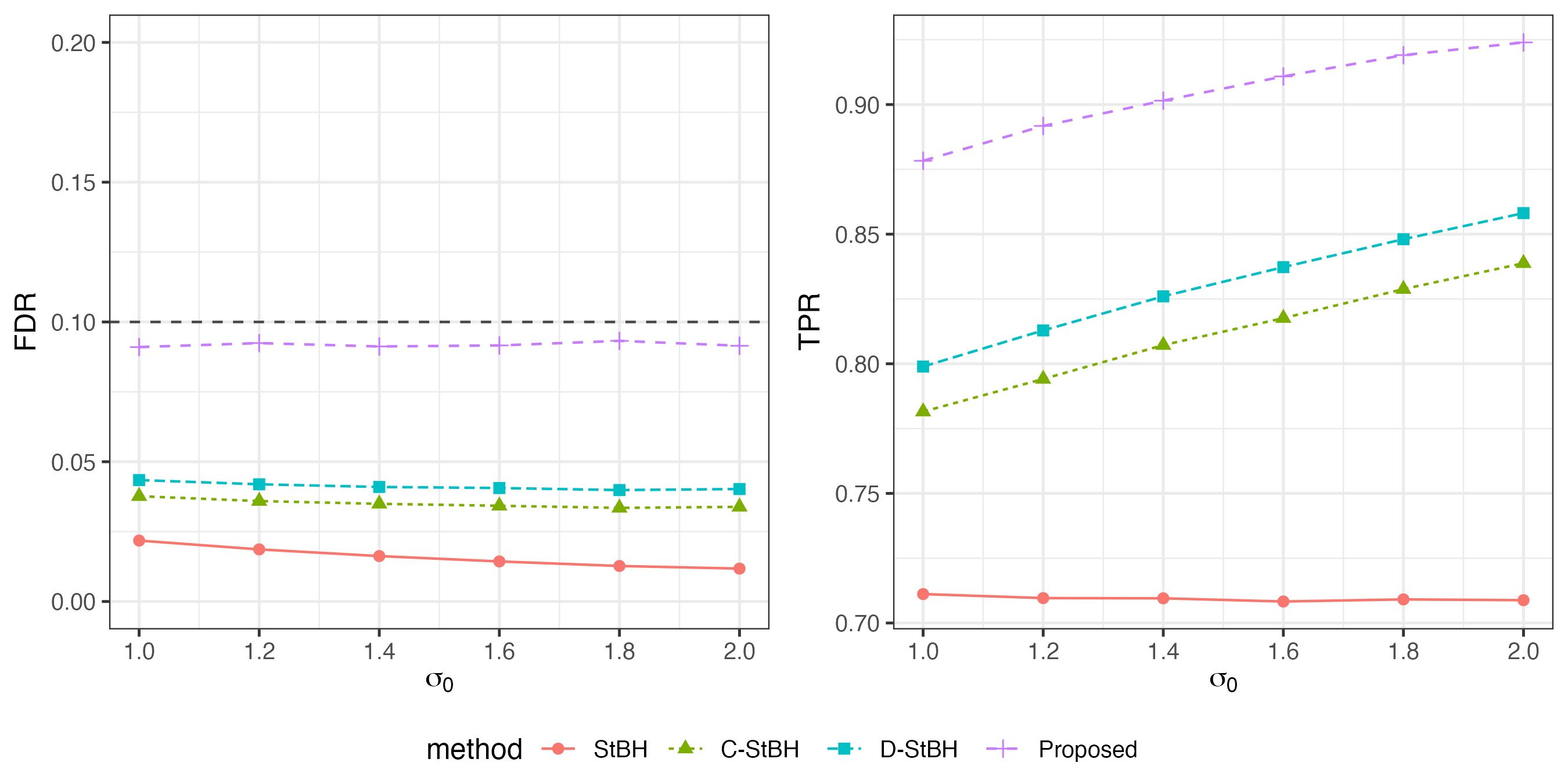}
    \caption{Empirical FDR and TPR averaged over 1,000 replications under the truncated Gaussian null distribution $M_0$ with varying variance $\sigma_0$. The dashed line in the left panel indicates the target FDR level of 0.1.}
    \label{fig:3}
\end{figure}

Figure~\ref{fig:4} presents histograms of the $p$-values computed under the standard Gaussian null, $\{p_i^{\mathrm{std}}\}_{i=1}^{5{,}000}$, and under our estimated null distribution, $\{p_i^{\mathrm{EB}}\}_{i=1}^{5{,}000}$, when $\sigma_0=2$. The $p$-values based on our estimate display a uniform right tail, whereas those under the standard Gaussian null exhibit a clear deviation from uniformity. This indicates that our method accurately recovers the true null distribution and yields well-calibrated $p$-values.

\begin{figure}[!htb]
    \centering
    \includegraphics[width=0.9\linewidth]{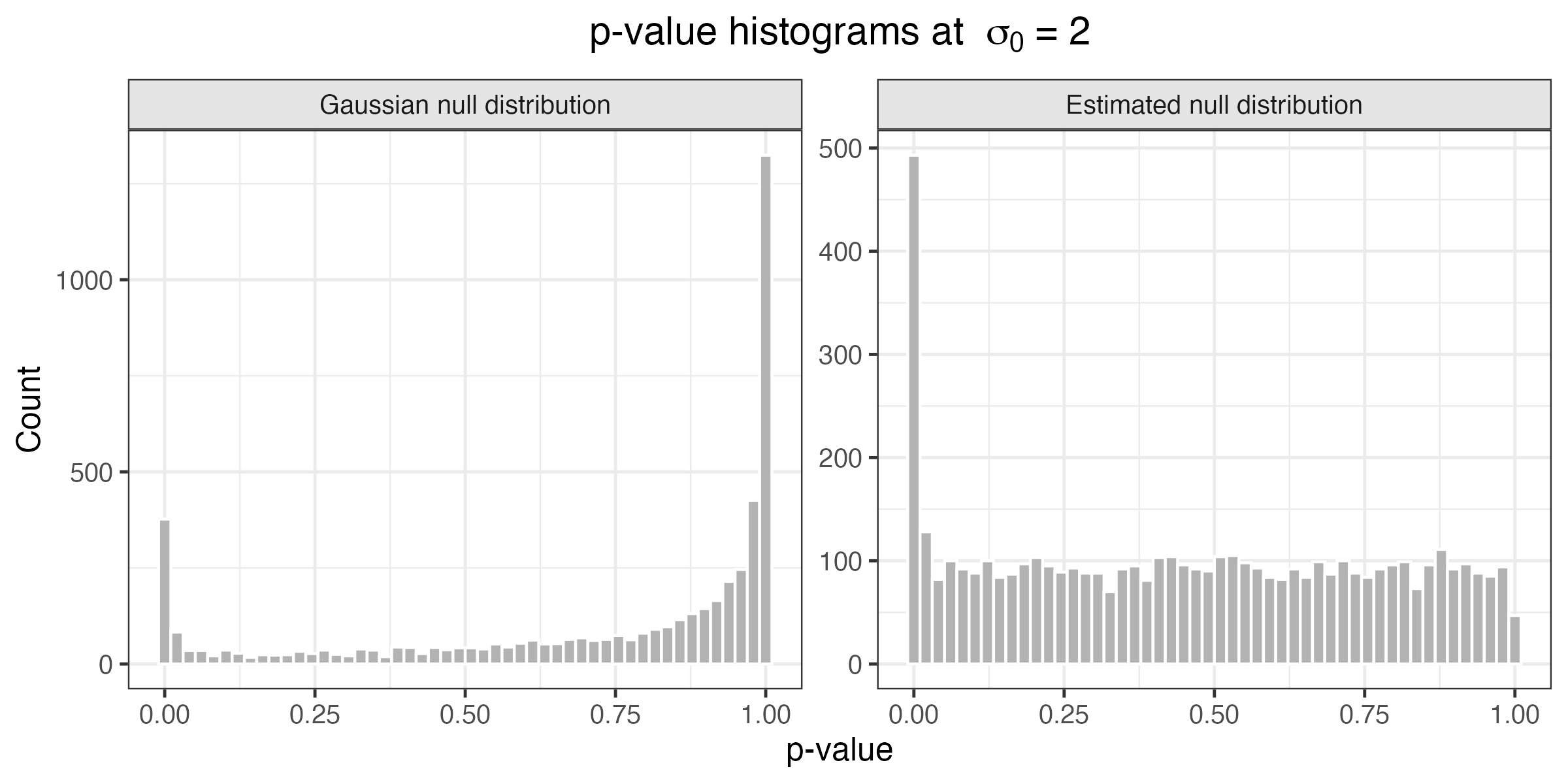}
    \caption{Histograms of $p$-values computed under the standard Gaussian null distribution (left panel) and under our estimated null distribution (right panel) at variance $\sigma_0 = 2$.
}
    \label{fig:4}
\end{figure}

\section{Real Data Example}\label{sec:4}
In this section, we evaluate the proposed methods using the TCGA-HGSC phosphoproteomics data and compare their performance with existing approaches.

The dataset contains normalized phosphorylation abundance measurements for 5{,}746 phosphorylation sites observed across 67 tumor samples of high-grade serous ovarian carcinoma. Each phosphorylation site is defined by a gene symbol and a modification position, and its phosphorylation level was quantified for all available samples.
Following the original study \citep{zhang2016integrated}, the 67 samples are classified into five proteomic subtypes: A (Differentiated), B (Immunoreactive), C (Proliferative), D (Mesenchymal), and E (Stromal), with 21, 10, 7, 21, and 8 samples in each subtype, respectively.

\subsection{Experimental Setup for Subtype Comparisons}
It is customary to regard differential abundance analysis, e.g., in gene expression studies, as a two-sided hypothesis testing problem aimed at detecting overall dysregulation. In phosphorylation analysis, however, the identification of amplified and depleted signaling levels (phosphorylation on serine, threonine, and tyrosine residues) in specific groups of samples carries distinct biological meanings, and therefore separate one-sided tests are more meaningful. 

In this context, we focus on identifying phosphorylation sites that exhibit significantly higher expression levels in one subtype compared to another. Specifically, we consider pairwise comparisons between two subtypes, formulated as a large-scale multiple testing problem of the form
\begin{equation*}
    H_{0,i}: \mu_{s_1,i} \le \mu_{s_2,i} \quad \text{vs.} \quad H_{1,i}: \mu_{s_1, i} > \mu_{s_2, i}, \quad \text{for }i = 1, \ldots, m,
\end{equation*}
where $\mu_{s_1,i}$ and $\mu_{s_2,i}$ denote the mean phosphorylation levels of site $i$ in subtypes $s_1$ and $s_2$, respectively.

Among the 20 possible subtype pairs, we focus on two representative cases for which the sample sizes are sufficiently large and the corresponding null distributions appear to be conservative:
\begin{align*}
    H_{0,i}: \mu_{D,i} \le \mu_{B,i} \quad &\text{vs.} \quad H_{1,i}: \mu_{D,i} > \mu_{B,i} \quad \text{and} \\ 
    H_{0,i}: \mu_{D,i} \le \mu_{A,i} \quad &\text{vs.} \quad H_{1,i}: \mu_{D,i} > \mu_{A,i}, \quad \text{for } i = 1, \ldots, 5{,}746,
\end{align*}
corresponding to comparisons between subtypes D and B, and between D and A, respectively.

Given a pair of subtypes to be compared, we compute the two-sample $t$-statistic for each phosphorylation site and apply all competing methods to obtain their respective rejection sets.
For the existing methods (StBH, C-StBH, and D-StBH), the $p$-values are computed as
\begin{equation*}
    p_i^{\mathrm{std}} = 1 - \Phi(t_i), \qquad i = 1,\ldots,5{,}746,
\end{equation*}
where $t_i$ denotes the $t$-statistic for the $i$-th phosphorylation site. Each method is then applied to these $p$-values to determine the rejection set.
For the proposed method, we first estimate the null distribution $F_0$ from the observed $t$-statistics $\{t_i\}_{i=1}^{5{,}746}$, construct the $p$-values as
\begin{equation*}
    p_i^{\mathrm{EB}} = 1 - \hat{F}_0(t_i), \qquad i = 1,\ldots,5{,}746,
\end{equation*}
and subsequently apply the Storey--BH procedure.
The two subtype comparisons involve sample sizes of 21 + 10 = 31 (D vs.\ B) and 21 + 21 = 42 (D vs.\ A), for which the normal approximation to the $t$-distribution is sufficiently accurate. As a result, computing $p$-values using the standard Gaussian distribution for the existing methods, as well as applying the proposed method to the observed t-statistics $\{t_i\}_{i =1}^{5{,}746}$, is well justified in this setting. Throughout the analysis, the target FDR level is set to $0.1$.

\subsection{Results}
We first present the results for the comparison between subtypes D and B. The numbers of rejected hypotheses were 260, 318, 326, and 400 for the StBH, C-StBH, D-StBH, and the proposed method, respectively. 
The rejection sets were nested as 
\begin{equation*}
    \mathcal{R}_{\text{StBH}} \subset \mathcal{R}_{\text{C-StBH}} \subset \mathcal{R}_{\text{D-StBH}} \subset \mathcal{R}_{\text{Proposed}},
\end{equation*}
indicating that our method additionally rejected 74 hypotheses that were not detected by the existing approaches.

Figure~\ref{fig:5} shows the phosphorylation abundance patterns for the 74 sites uniquely identified by the proposed method. The majority of these sites exhibit higher phosphorylation abundance in subtype D and lower abundance in subtype B, as indicated by the predominance of red and blue colors, respectively, though the strength of this contrast varies across sites. This observation underscores the practical power advantage of the proposed method over existing procedures.

Figure~\ref{fig:6} compares histograms of $p$-values obtained under the standard Gaussian null distribution, $\{p_i^{\mathrm{std}}\}_{i=1}^{5{,}746}$, with those based on the null distribution estimated by our method, $\{p_i^{\mathrm{EB}}\}_{i=1}^{5{,}746}$. The $p$-values obtained from our estimated null exhibit an approximately uniform right tail, whereas those derived from the Gaussian distribution display a pronounced accumulation near 1.

Figure~\ref{fig:7} displays the histogram of the $t$-statistics together with two scaled null density curves.
The red curve corresponds to the null density estimated by our method, scaled by the estimated null proportion obtained via Storey’s method applied to the $p$-values from our procedure.
The blue curve applies the same scaling factor to the standard Gaussian density, ensuring a fair comparison between the two densities. The scaled estimated null density closely matches the empirical distribution of the test statistics, whereas the scaled standard Gaussian density exhibits a visible mismatch. 

\begin{figure}[!htb]
    \centering
    \includegraphics[width=0.65\linewidth]{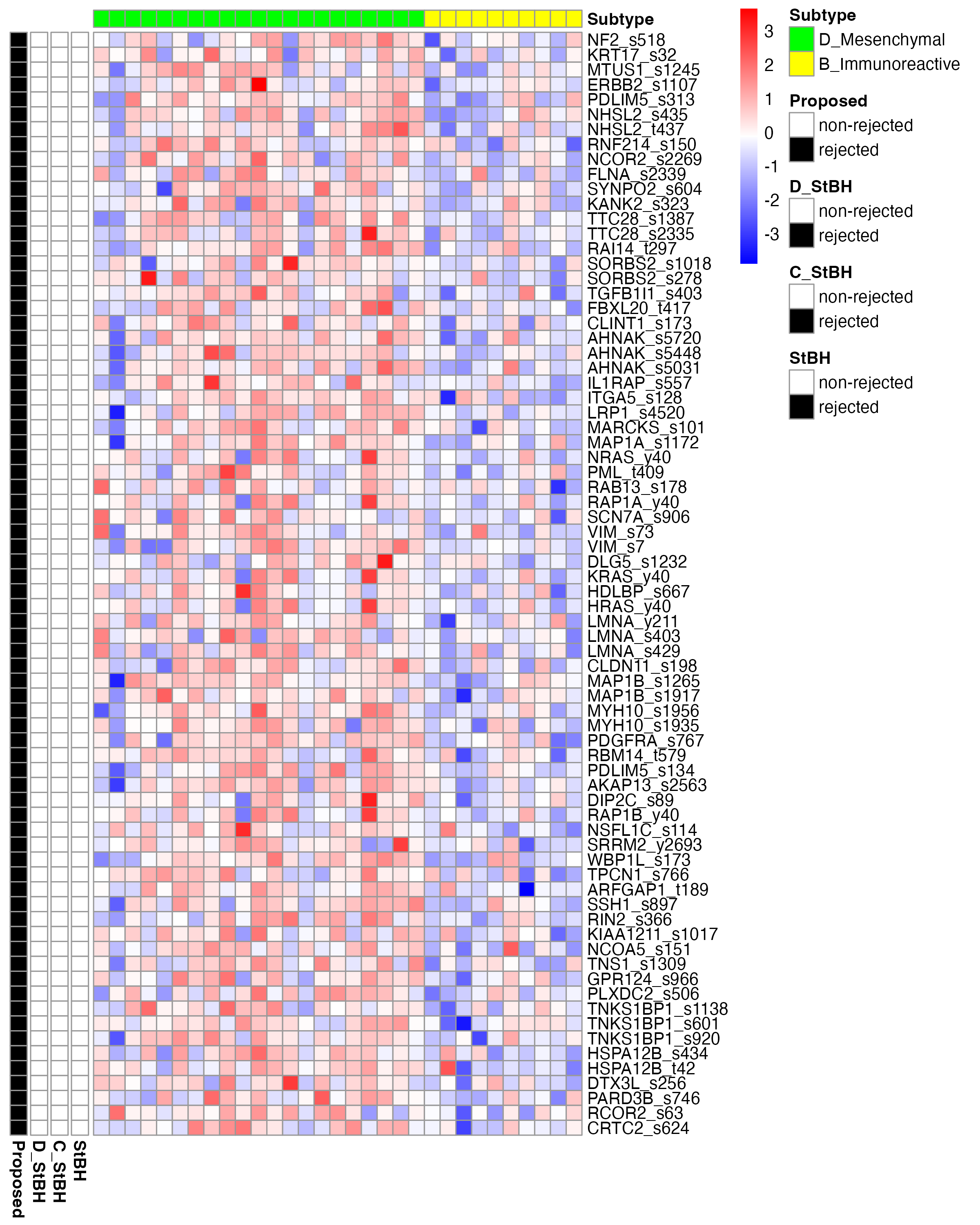}
    \caption{Heatmap of phosphorylation abundance for phosphorylation sites uniquely identified by the proposed method. Rows correspond to phosphorylation sites and columns correspond to samples. Colors are scaled such that increasingly red (blue) shades indicate higher (lower) phosphorylation abundance. Left-side annotations indicates rejection status across methods.}
    \label{fig:5}
\end{figure}

\begin{figure}[!htb]
    \centering
    \includegraphics[width=0.9\linewidth]{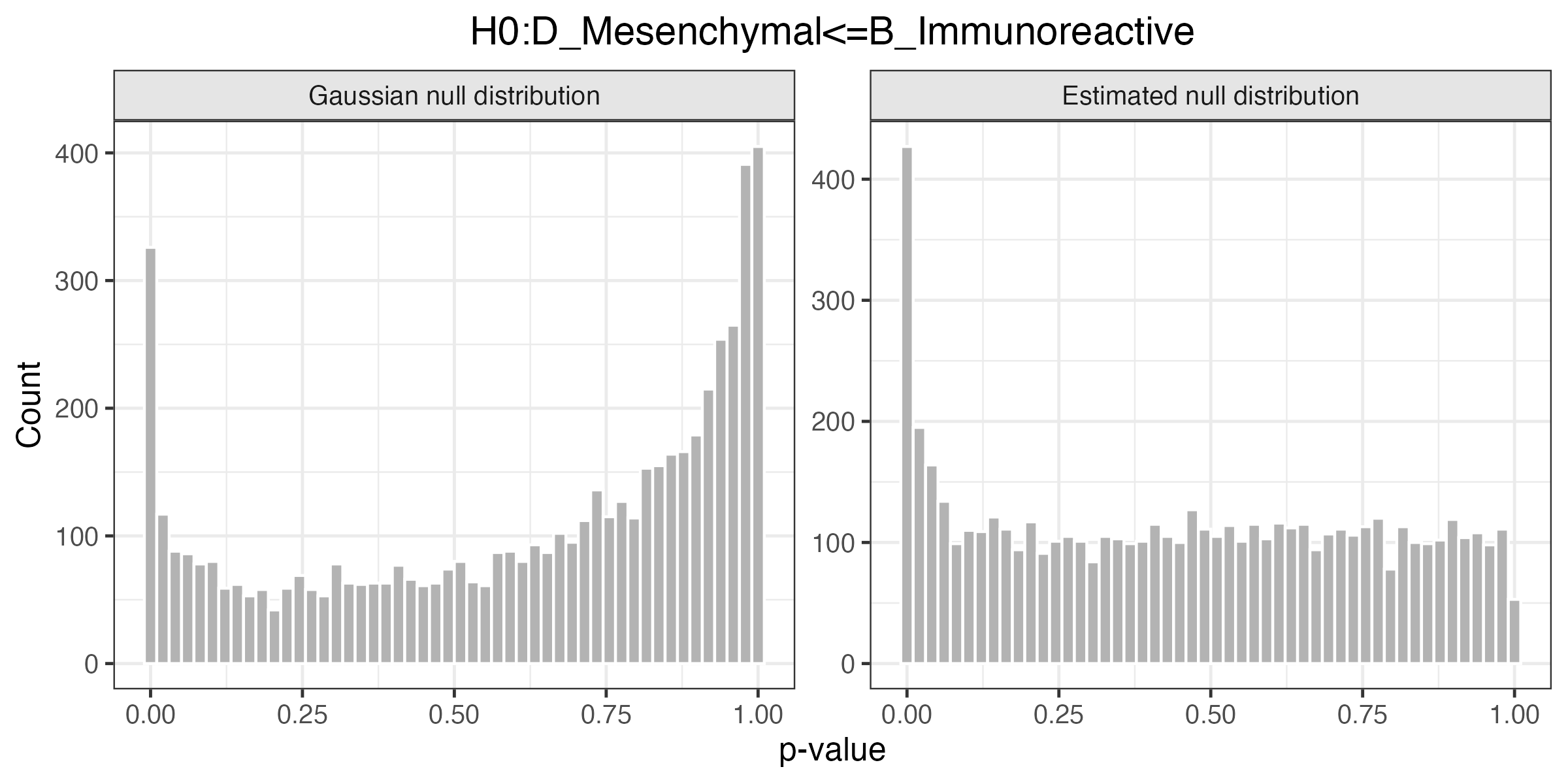}
    \caption{Histograms of the $p$-values computed under the Gaussian null distribution (left) and under our estimated null distribution (right), corresponding to the hypothesis tests $H_{0,i}: \mu_{D,i} \le \mu_{B,i}$.}
    \label{fig:6}
\end{figure}

\begin{figure}[!htb]
    \centering
    \includegraphics[width=0.4\linewidth]{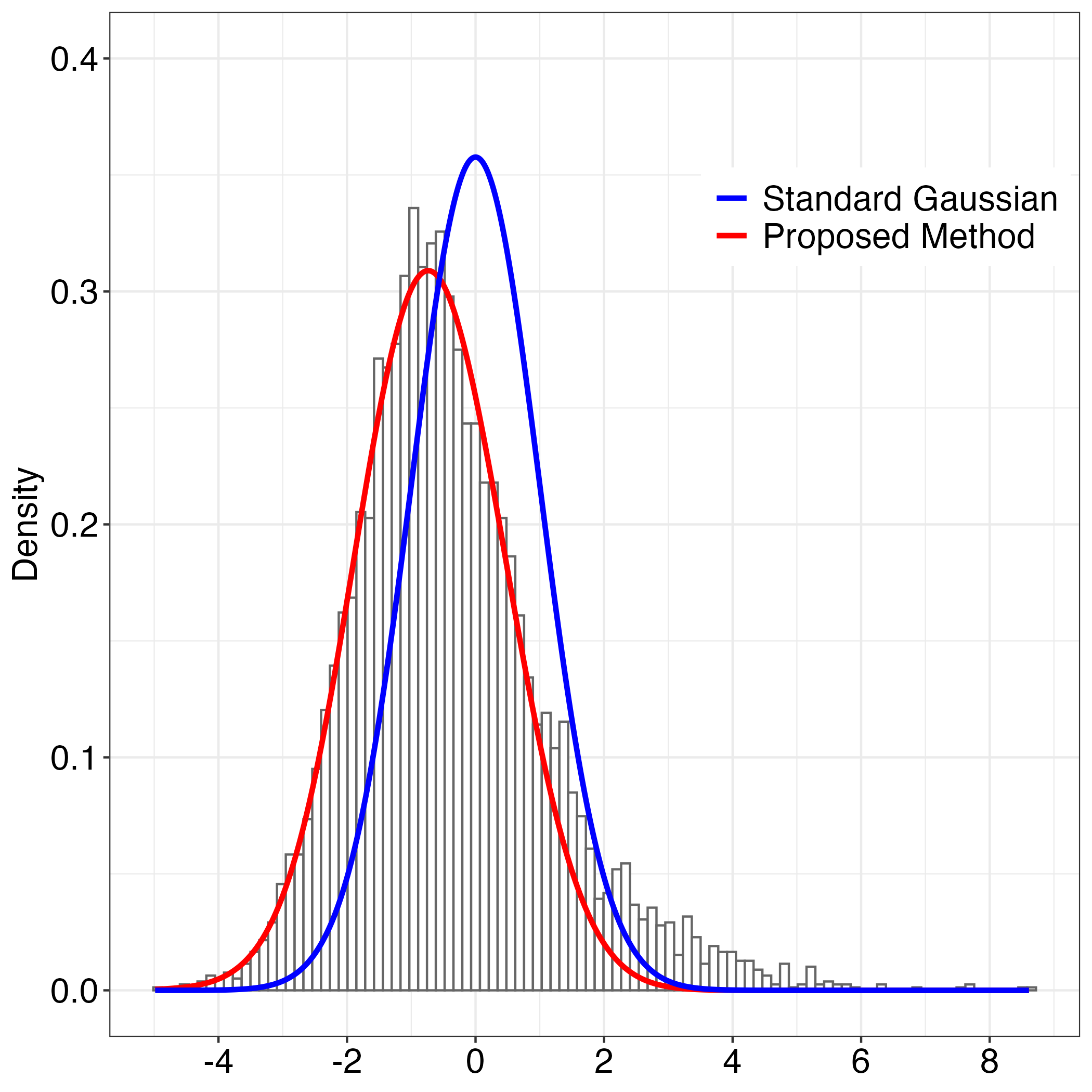}
    \caption{Comparison between the scaled null density estimated by our proposed method (red line) and the standard Gaussian distribution scaled by the same factor (blue line), corresponding to the hypothesis tests $H_{0,i}: \mu_{D,i} \le \mu_{B,i}$.}
    \label{fig:7}
\end{figure}

A similar trend was observed for the comparison between subtypes D and A. In this case, the numbers of rejected hypotheses were 239, 292, 280, and 293 for the StBH, C-StBH, D-StBH, and the proposed method, respectively. The rejection sets followed the inclusion relationship 
\begin{equation*}
    \mathcal{R}_{\text{StBH}} \subset \mathcal{R}_{\text{D-StBH}} \subset \mathcal{R}_{\text{C-StBH}} \subset \mathcal{R}_{\text{Proposed}},
\end{equation*}
where the order of C-StBH and D-StBH was reversed compared to the previous case. Nevertheless, our proposed method again produced the largest number of rejections.

Figure~\ref{fig:8} presents the phosphorylation abundance of site ANKRD50\_s1167, which is uniquely identified by the proposed method. The majority of samples from subtype D exhibit positive phosphorylation abundance, whereas most samples from subtype A exhibit negative abundance.
Consistently, the subtype-wise average phosphorylation abundance is positive for subtype D (approximately $0.553$) and negative for subtype A (approximately $-0.125$), indicating a clear and coherent subtype-specific shift that is effectively captured by the proposed method but missed by existing procedures.

Figure~\ref{fig:9} demonstrates that the $p$-values from our method have an approximately uniform right tail, while the Standard Gaussian distribution based $p$-values exhibit the characteristic heavy right tail that signals the presence of conservative $p$-values. 

In Figure~\ref{fig:10}, the null density estimated by our method closely matches the empirical distribution of the test statistics, whereas the standard Gaussian null shows a clear mismatch.

\begin{figure}[!htb]
    \centering
    \includegraphics[width=0.9\linewidth]{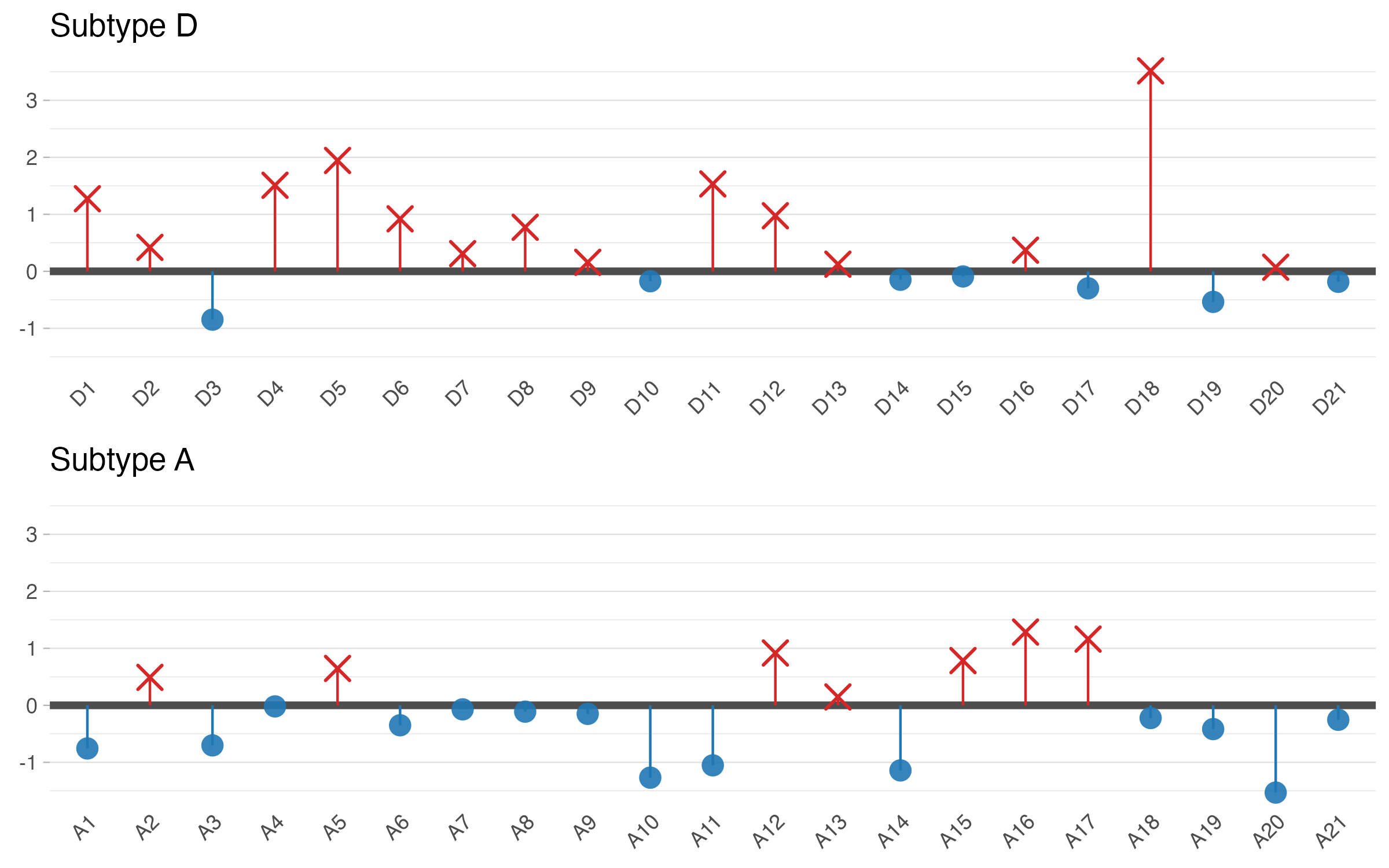}
    \caption{Phosphorylation abundance of site ANKRD50\_s1167 uniquely identified by the proposed method. The top panel shows individual samples from subtype D (D1--D21), and the bottom panel shows individual samples from subtype A (A1--A21). Each vertical segment represents the phosphorylation abundance of an individual sample. Positive values are shown as upward red segments with cross symbols, whereas negative values are shown as downward blue segments with circular symbols. The horizontal black line indicates zero phosphorylation abundance.}
    \label{fig:8}
\end{figure}

\begin{figure}[!htb]
    \centering
    \includegraphics[width=0.9\linewidth]{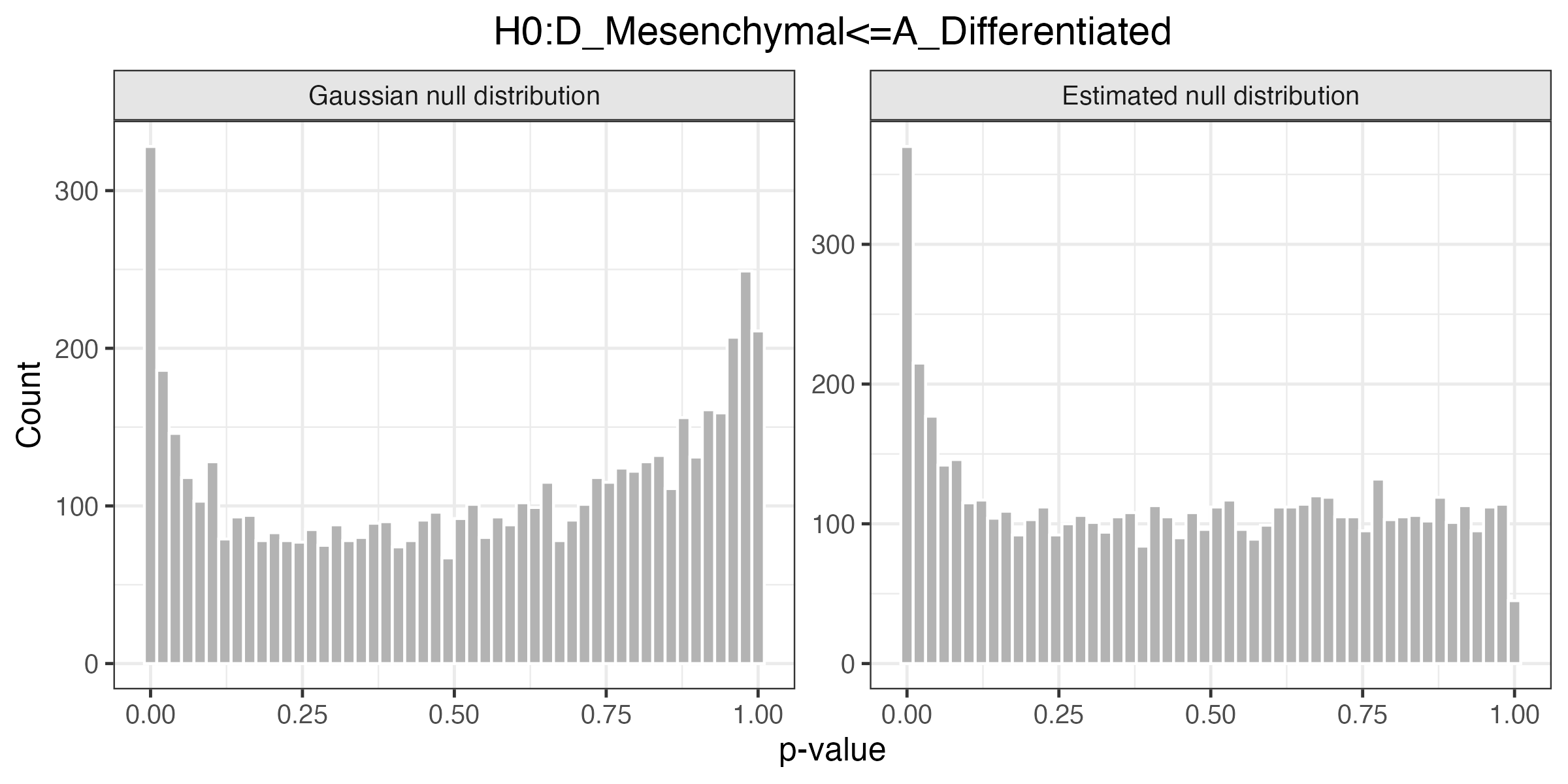}
    \caption{Histograms of the $p$-values computed under the standard Gaussian null distribution (left) and under our estimated null distribution (right), corresponding to the hypothesis tests $H_{0,i}: \mu_{D,i} \le \mu_{A,i}$.}
    \label{fig:9}
\end{figure}

\begin{figure}[!htb]
    \centering
    \includegraphics[width=0.4\linewidth]{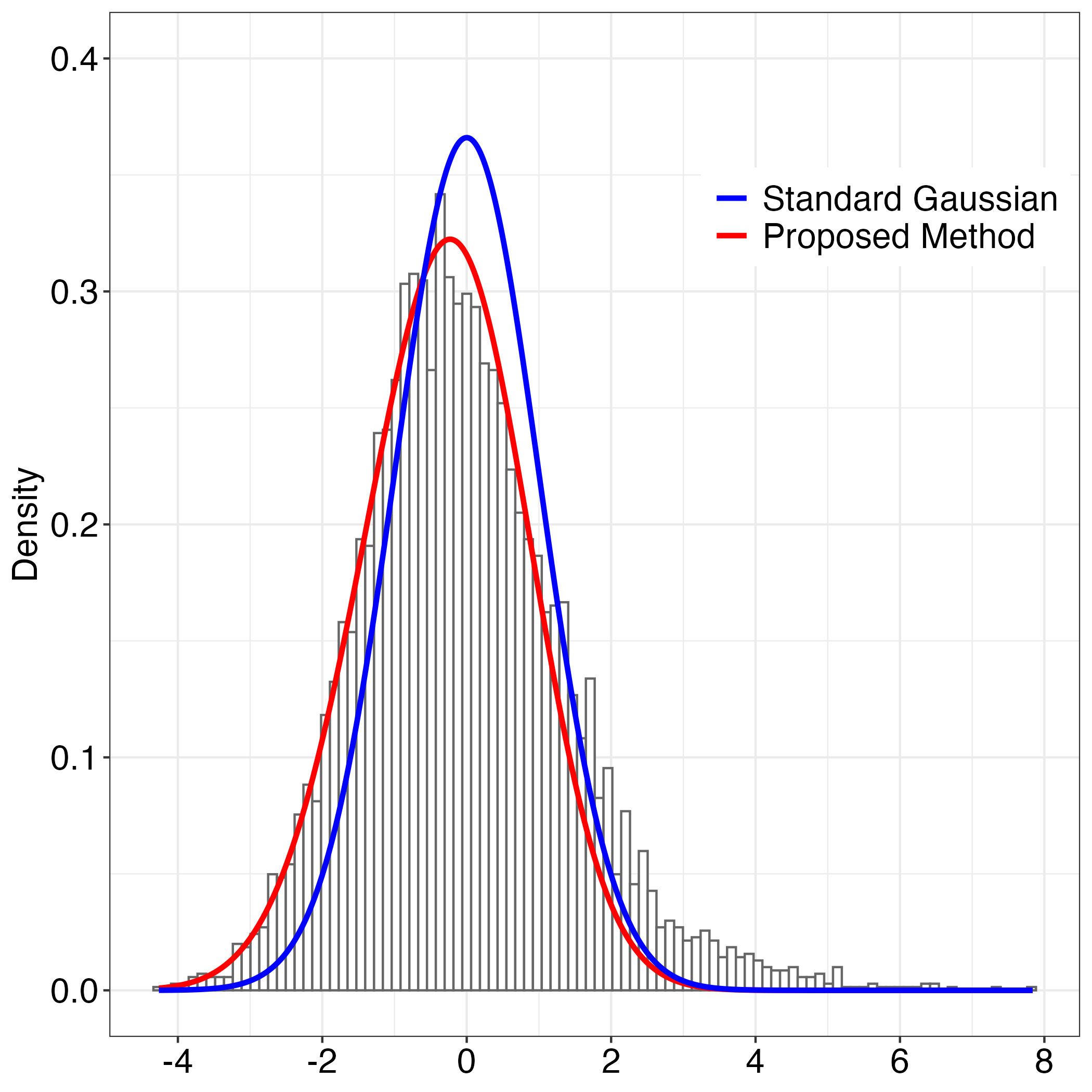}
    \caption{Comparison between the scaled null density estimated by our proposed method (red line) and the standard Gaussian distribution scaled by the same factor (blue line), corresponding to the hypothesis tests $H_{0,i}: \mu_{D,i} \le \mu_{A,i}$.}
    \label{fig:10}
\end{figure}

\section{Conclusion and Discussion}\label{sec:5}
In one-sided multiple testing problems, conservative $p$-values naturally arise and can lead to substantial power loss for standard procedures such as BH and Storey--BH. To address this issue, we propose a data-driven approach that estimates the null distribution $F_0$ and constructs $p$-values based on the estimated null, rather than relying on the overly conservative standard Gaussian null. Our method introduces three families of prior distributions for $M_0$, together with estimation algorithms tailored to each choice. Importantly, the resulting $p$-values can be analyzed using conventional multiple testing procedures without any modification. The key distinction of our approach is that it improves power by modifying the $p$-value construction step, rather than altering the multiple testing procedure itself. Through extensive simulation studies and a real data application, we demonstrate that the proposed method consistently outperforms existing approaches in terms of power, particularly in settings where a large fraction of $p$-values are conservative.

We conclude with a discussion of limitations and directions for future research. 
First, it is necessary to establish the (asymptotic) validity of multiple testing procedures, such as the BH or Storey--BH procedure, when applied to the empirical Bayes p-values. Since the empirical Bayes p-values share a common estimated prior distribution, they are not independent; consequently, the validity of multiple testing procedures applied to these p-values is not immediately guaranteed.

Second, as noted by a reviewer, the hypotheses are often dependent in many practical applications. In such cases, ignoring the dependency structure among hypotheses can adversely affect both the estimation of the null proportion and the FDR control---a limitation shared by the proposed method as well as existing methods. Addressing these challenges rigorously remains a non-trivial task and constitutes an important direction for future research.

Finally, there are many other sources of composite null hypotheses beyond the one-sided testing framework considered in this paper. For instance, Ryan Sun et al. \cite{sun2025testing} consider a union null hypothesis, which itself constitutes a form of composite null. In that setting, they propose csmGmm, a method that operates in a multivariate framework by fitting a constrained Gaussian mixture model designed to prevent incongruities between rankings based on the local false discovery rate and those based on classical frequentist significance. Extending our approach to accommodate the broader landscape of composite null structures arising from such diverse settings represents another promising avenue for future research.

\section*{Acknowledgement}
The authors declare that there are no conflicts of interest.

\bibliographystyle{apalike}
\bibliography{ref}

\appendix
\numberwithin{equation}{section}
\numberwithin{table}{section}
\numberwithin{figure}{section}

\numberwithin{theorem}{section}
\numberwithin{lemma}{section}
\numberwithin{proposition}{section}
\numberwithin{corollary}{section}

\section{Proof of Lemma~\ref{lemma1.1}}\label{appen_a}

\begin{proof}[Proof of Lemma~\ref{lemma1.1}]
    For a fixed null index $j$, we want to show that $p_j = 1 - \Phi(Z_j)$ is uniformly valid. A sufficient condition for the uniform validity of $p_j$ is the convexity of its c.d.f..
    Let $H$ denote the c.d.f. of $p_j$ and 
    let $F_0$ denote the marginal distribution of the null $Z_j$; that is,
    \begin{equation*}
        F_0(z) = \int_{-\infty}^z \int \phi(t; \mu, 1) ~dM_0(\mu) ~dt.
    \end{equation*}
    Then, we have
    \begin{equation*}
        H(x) 
        = \mathbb{P}(p_j \leq x) 
        = \mathbb{P}(1 - \Phi(Z_j) \leq x) 
        = \mathbb{P}(Z_j \geq \Phi^{-1}(1 - x)) 
        = 1 - F_0(\Phi^{-1}(1 - x)).    
    \end{equation*}
    By the chain rule and the inverse function theorem,
    \begin{equation*}
        h(x) 
        = H^\prime(x) 
        = F_0^{\prime}(\Phi^{-1}(1 - x)) \cdot (\Phi^{-1})^\prime (1-x) 
        = \frac{F_0^\prime(\Phi^{-1}(1 - x))}{\phi(\Phi^{-1}(1 - x))}.    
    \end{equation*}
    Since the Gaussian location family has the monotone likelihood ratio (MLR) property, $h(x)$ is a non-decreasing function of $x$, which implies that $H$ is convex. This completes the proof of the first part.
    
    For the second part, since $p_j$ is conservative whenever $M_0$ is not degenerate at zero, combining this with the result above we can establish the desired result.
\end{proof}

\section{Proof of Lemma~\ref{lemma2.1}}\label{appen_b}
\begin{proof}[Proof of Lemma~\ref{lemma2.1}]
    In this proof, we consider the case when $ \sigma^2 = 1 $. The result can be derived in a similar manner when $ \sigma^2 \neq 1 $. We follow the computation given in \cite{o1976bayes}.
    
    By the definition of $\tilde{M}_0$, we can express
    \begin{align*}
        f_0(z) 
        &= \int \phi(z-\mu) \, d\tilde{M}_0(\mu) \\
        &= \int_{-\infty}^0 \phi(z) 
            \exp\!\left\{ z\mu - \tfrac{1}{2}\mu^2 \right\} \cdot
            \frac{1}{\Phi \!\left(-\tfrac{\mu_0}{\sigma_0}\right)} 
            \frac{1}{\sqrt{2\pi}\sigma_0} \,
            \exp\!\left\{ -\tfrac{1}{2\sigma_0^2}(\mu - \mu_0)^2 \right\} d\mu \\
        &= \phi(z) 
           \frac{1}{\Phi \!\left(-\tfrac{\mu_0}{\sigma_0}\right)} 
           \frac{1}{\sqrt{2\pi}\sigma_0} 
           \underbrace{\int_{-\infty}^0 
           \exp\!\left\{ z\mu - \tfrac{1}{2}\mu^2 - \tfrac{1}{2\sigma_0^2}(\mu - \mu_0)^2 \right\} d\mu}_{\eqqcolon A}.
    \end{align*}
    To proceed, we examine the term $A$ in more detail.
    \begin{align*}
        A 
        &= \int_{-\infty}^0 
            \exp\!\left\{ z\mu - \tfrac{1}{2}\mu^2 
            - \tfrac{1}{2\sigma_0^2}\bigl(\mu^2 - 2\mu_0\mu + \mu_0^2\bigr) \right\} d\mu \\
        &= \exp\!\left\{ -\tfrac{\mu_0^2}{2\sigma_0^2} \right\} 
           \int_{-\infty}^0 
            \exp\!\left\{ -\tfrac{1}{2}\!\left( 1+\tfrac{1}{\sigma_0^2} \right)
            \Biggl( \mu^2 - \tfrac{2\sigma_0^2}{\sigma_0^2+1}\!\left( z + \tfrac{\mu_0}{\sigma_0^2}\right)\mu \Biggr) \right\} d\mu \\
        &= \exp\!\left\{ -\tfrac{\mu_0^2}{2\sigma_0^2} \right\} 
           \sqrt{2\pi}\,\tilde{\sigma} 
           \exp\!\left\{ \tfrac{1}{2\tilde{\sigma}^2}\,\tilde{\sigma}^4
            \left( z + \tfrac{\mu_0}{\sigma_0^2} \right)^2 \right\} 
            \frac{1}{\sqrt{2\pi}\,\tilde{\sigma}} \\
        &\quad\quad \times\int_{-\infty}^0 \exp\!\left\{ -\tfrac{1}{2\tilde{\sigma}^2}
        \Bigl( \mu - \tilde{\sigma}^2\!\left( z + \tfrac{\mu_0}{\sigma_0^2} \right) \Bigr)^2 \right\} d\mu  \\
        &= \exp\!\left\{ -\tfrac{\mu_0^2}{2\sigma_0^2} \right\} 
           \sqrt{2\pi}\,\tilde{\sigma} 
           \exp\!\left\{ \tfrac{1}{2}\tilde{\sigma}^2
            \left( z + \tfrac{\mu_0}{\sigma_0^2} \right)^2 \right\} 
            \Phi\!\left( -\tilde{\sigma}\!\left( z + \tfrac{\mu_0}{\sigma_0^2} \right) \right),
    \end{align*}
    where $\tilde{\sigma}^2 = \dfrac{\sigma_0^2}{1+\sigma_0^2}$. Using the expression obtained for $A$, $f_0(z)$ can be rewritten as
    \begin{align*}
        f_0(z) 
        &= \frac{\tilde{\sigma}}{\sigma_0} 
           \frac{1}{\Phi\!\left( -\tfrac{\mu_0}{\sigma_0} \right)} 
           \Phi\!\left( -\tilde{\sigma}\!\left( z + \tfrac{\mu_0}{\sigma_0^2} \right) \right) 
           \exp\!\left\{ -\tfrac{\mu_0^2}{2\sigma_0^2} \right\} 
           \exp\!\left\{ \tfrac{\tilde{\sigma}^2}{2} \left( z + \tfrac{\mu_0}{\sigma_0^2} \right)^2 \right\} 
            \phi(z) \\
        &= \frac{\tilde{\sigma}}{\sqrt{2\pi}\,\sigma_0} 
           \frac{1}{\Phi\!\left( -\tfrac{\mu_0}{\sigma_0} \right)}  
           \Phi\!\left( -\tilde{\sigma}\!\left( z + \tfrac{\mu_0}{\sigma_0^2} \right) \right) 
           \exp\!\left\{ -\tfrac{\mu_0^2}{2\sigma_0^2} \right\} 
           \underbrace{\exp\!\left\{ \tfrac{\tilde{\sigma}^2}{2} \left( z + \tfrac{\mu_0}{\sigma_0^2} \right)^2 \right\} 
            \exp\!\left\{ -\tfrac{z^2}{2} \right\}}_{\eqqcolon B}.
    \end{align*}
    Consider the term denoted by $B$.
    \begin{align*}
        B 
        &= \exp\!\left\{ -\tfrac{1}{2}z^2 
            + \tfrac{1}{2}\tilde{\sigma}^2 z^2 
            + \tilde{\sigma}^2 \tfrac{\mu_0}{\sigma_0^2} z 
            + \tfrac{\tilde{\sigma}^2}{2} \tfrac{\mu_0^2}{\sigma_0^4} \right\} \\
        &= \exp\!\left\{ -\tfrac{1}{2(1+\sigma_0^2)} 
            (z - \mu_0)^2 
            + \tfrac{\mu_0^2}{2\sigma_0^2} 
             \right\} \\
        &= \phi\!\left(z \,;\, \mu_0,\, 1+\sigma_0^2 \right) 
           \exp\!\left\{ \tfrac{\mu_0^2}{2\sigma_0^2} \right\}
           \sqrt{2\pi(1+\sigma_0^2)}.
    \end{align*}
    Using the expression for $B$, $f_0(z)$ can be rewritten as
    \begin{align*}
        f_0(z) 
        &= \frac{1}{\Phi\!\left( -\tfrac{\mu_0}{\sigma_0} \right)} 
           \phi\!\left(z ; \mu_0, 1+\sigma_0^2 \right) 
           \Phi\!\left( -\tfrac{\sigma_0^2 z + \mu_0}{\sigma_0 \sqrt{1 + \sigma_0^2}} \right)\\
        &= \dfrac{1}{\omega \Phi(\zeta)}\phi \left( \dfrac{z-\mu_0}{\omega}\right) \Phi\left( \alpha_0(\zeta) + \alpha \dfrac{z - \mu_0}{\omega} \right),
    \end{align*}
    where $\omega = \sqrt{1+\sigma_0^2},\; \zeta = -\dfrac{\mu_0}{\sigma_0},\; \alpha = -\sigma_0,$ and $\alpha_0(\zeta) = \zeta\sqrt{\alpha^2 + 1}$.
\end{proof}

\section{Proof of Theorems~\ref{theorem2.1} and~\ref{theorem2.3}}\label{appen_c}
We first establish the auxiliary lemmas needed for the proof of Theorem~\ref{theorem2.1}.

\begin{lemma}[Argmax Theorem]\label{lemmaC.1}
Let $\Theta \subset \mathbb{R}$ be compact. Let $M_n:\Theta\to\mathbb{R}$ and $M:\Theta\to\mathbb{R}$ be real-valued functions. Suppose that
\begin{itemize}
    \item[(i)] $\hat{\theta}_n \in \arg\max_{\theta\in\Theta} M_n(\theta)$ for each $n$,
    \item[(ii)] $\sup_{\theta\in\Theta}|M_n(\theta)-M(\theta)|\xrightarrow{p}0$ as $n\to\infty$,
    \item[(iii)] $\theta_0\in\Theta$ is the unique maximizer of $M(\theta)$ in the sense that for every $\epsilon>0$,
    \[
    \sup_{\theta\in\Theta:|\theta-\theta_0|\ge\epsilon}M(\theta)<M(\theta_0).
    \]
\end{itemize}
Then, $\hat{\theta}_n\xrightarrow{p}\theta_0$ as $n\to\infty$.
\end{lemma}

\begin{proof}
Fix $\epsilon>0$ and define
\[
A_\epsilon=\{\theta\in\Theta:|\theta-\theta_0|\ge\epsilon\}.
\]
By assumption (iii), there exists $\eta_\epsilon>0$ such that
\begin{equation}\label{eqn:lemmaC.1_1}
    \eta_\epsilon \coloneqq M(\theta_0)-\sup_{\theta\in A_\epsilon}M(\theta) > 0.
\end{equation}
Define the event
\begin{equation}\label{eqn:lemmaC.1_2}
    E_n= E_n(\epsilon) = \left\{\sup_{\theta\in\Theta}|M_n(\theta)-M(\theta)|<\frac{\eta_\epsilon}{3}\right\},
\end{equation}
so that $\mathbb{P}(E_n)\to 1$ as $n\to\infty$ by assumption (ii). 

On the event $E_n$,
\begin{equation}\label{eqn:lemmaC.1_3}
    M_n(\theta_0)> M(\theta_0)-\frac{\eta_\epsilon}{3},
\end{equation}
and for any $\theta\in A_\epsilon$,
\begin{equation}\label{eqn:lemmaC.1_4}
    M_n(\theta)
    <
    M(\theta)+\frac{\eta_\epsilon}{3}
    \le
    M(\theta_0)-\eta_\epsilon+\frac{\eta_\epsilon}{3}
    =
    M(\theta_0)-\frac{2\eta_\epsilon}{3},
\end{equation}
where the first inequality follows from \eqref{eqn:lemmaC.1_2} and the second from \eqref{eqn:lemmaC.1_1}. Combining \eqref{eqn:lemmaC.1_3} and \eqref{eqn:lemmaC.1_4} yields $M_n(\theta)<M_n(\theta_0)$ for every $\theta\in A_\epsilon$ on $E_n$, so by compactness of $\Theta$ and assumption (i), $\hat{\theta}_n\in\Theta\setminus A_\epsilon$ on $E_n$, i.e.,
\[
E_n\subset\{|\hat{\theta}_n-\theta_0|<\epsilon\}.
\]
Therefore, we obtain 
\[
\mathbb{P}(|\hat{\theta}_n-\theta_0|\ge\epsilon)\le\mathbb{P}(E_n^c).
\]
Since $\mathbb{P}(E_n^c) \to 0$ by construction for the given $\epsilon$, and since $\epsilon > 0$ was arbitrary, it follows that $\hat{\theta}_n \xrightarrow{p} \theta_0$.
\end{proof}

\begin{lemma}[Uniform Law of Large Numbers via Bracketing]\label{lemmaC.2}
Let $\mathcal{F}=\{f_\theta:\theta\in\Theta\}$ be a class of measurable functions. Suppose that for every $\epsilon>0$, the $\epsilon$-bracketing number $N_{[\,]}(\epsilon,\mathcal{F},L^1(P))$ is finite, i.e., there exist finitely many brackets $[l_j,u_j]$ with $\mathbb{E}[u_j(X)-l_j(X)]<\epsilon$ such that for every $f\in\mathcal{F}$, there exists $j$ with $l_j(x)\le f(x)\le u_j(x)$ for all $x$. Then, $\mathcal{F}$ is Glivenko--Cantelli, i.e.,
\[
\sup_{f\in\mathcal{F}}\left|\frac{1}{n}\sum_{i=1}^n f(X_i)-\mathbb{E}[f(X)]\right|\xrightarrow{p}0.
\]
\end{lemma}

\begin{proof}
Fix $\epsilon>0$. By assumption, there exist finitely many brackets $[l_1,u_1],\dots,[l_K,u_K]$ such that:
\begin{itemize}
    \item[(i)] For every $f\in\mathcal{F}$, there exists $j\in\{1,\dots,K\}$ with $l_j(x)\le f(x)\le u_j(x)$ for all $x$.
    \item[(ii)] $\mathbb{E}[u_j(X)-l_j(X)]<\epsilon$ for each $j$.
\end{itemize}
For each $f\in\mathcal{F}$, let $j(f)$ denote the index of the bracket containing $f$. By conditions (i) and (ii), for any $f\in\mathcal{F}$,
\begin{equation}\label{eqn:lemmaC.2_1}
\begin{split}
    \frac{1}{n}\sum_{i=1}^n f(X_i)-\mathbb{E}[f(X)]
    &\le
    \frac{1}{n}\sum_{i=1}^n u_{j(f)}(X_i)-\mathbb{E}[l_{j(f)}(X)]\\
    &=
    \left(\frac{1}{n}\sum_{i=1}^n u_{j(f)}(X_i)-\mathbb{E}[u_{j(f)}(X)]\right)\\
    &\qquad+
    \left(\mathbb{E}[u_{j(f)}(X)]-\mathbb{E}[l_{j(f)}(X)]\right)\\
    &\le 
    \max_{1\le j\le K}\left|\frac{1}{n}\sum_{i=1}^n u_j(X_i)-\mathbb{E}[u_j(X)]\right| + \epsilon,
\end{split}
\end{equation}
and by a symmetric argument,
\begin{equation}\label{eqn:lemmaC.2_2}
    \mathbb{E}[f(X)] - \frac{1}{n} \sum_{i = 1}^n f(X_i) \leq 
    \max_{1\le j\le K}\left|\frac{1}{n}\sum_{i=1}^n l_j(X_i)-\mathbb{E}[l_j(X)]\right|+\epsilon.
\end{equation}
Combining \eqref{eqn:lemmaC.2_1} and \eqref{eqn:lemmaC.2_2} and taking the supremum over $f\in\mathcal{F}$,
\begin{align*}
    \sup_{f\in\mathcal{F}}\left|\frac{1}{n}\sum_{i=1}^n f(X_i)-\mathbb{E}[f(X)]\right| 
    &\leq 
    \max_{1\le j\le K}\left|\frac{1}{n}\sum_{i=1}^n u_j(X_i)-\mathbb{E}[u_j(X)]\right|\\
    &\qquad+ 
    \max_{1\le j\le K}\left|\frac{1}{n}\sum_{i=1}^n l_j(X_i)-\mathbb{E}[l_j(X)]\right| + \epsilon.
\end{align*}
Since the maxima are over finitely many $j$, the law of large numbers gives
\[
\max_{1\le j\le K}\left|\frac{1}{n}\sum_{i=1}^n u_j(X_i)-\mathbb{E}[u_j(X)]\right| + \max_{1\le j\le K}\left|\frac{1}{n}\sum_{i=1}^n l_j(X_i)-\mathbb{E}[l_j(X)]\right| = o_p(1).
\]
Since $\epsilon>0$ was arbitrary, we conclude
\[
\sup_{f\in\mathcal{F}}\left|\frac{1}{n}\sum_{i=1}^n f(X_i)-\mathbb{E}[f(X)]\right|\xrightarrow{p}0. \qedhere
\]
\end{proof}

With the above preparations, we are now ready to prove Theorem~\ref{theorem2.1}.

\begin{proof}[Proof of Theorem~\ref{theorem2.1}]
Define
\[
g_\mu(z)=\log f_\mu^T(z)
=
-\frac{1}{2}\log(2\pi)
-\frac{1}{2}(z-\mu)^2
-\log\Phi(\xi-\mu)
+\log\mathbb{I}(z\le\xi),
\]
and the population criterion
\[
M(\mu)=\mathbb{E}_{\mu_0}[g_\mu(Z)].
\]
We verify assumptions (ii) and (iii) of Lemma~\ref{lemmaC.1}.

\medskip
\emph{Step 1. Uniform convergence.}
We show that
\[
\sup_{\mu\in\Theta}\left|M_n(\mu)-M(\mu)\right| 
= \sup_{\mu\in\Theta}\left|\frac{1}{n}\sum_{i=1}^n g_\mu(Z_i) - \mathbb{E}_{\mu_0}[g_\mu(Z)]\right|
\xrightarrow{p}0.
\]
Since $\Theta$ is compact, it is bounded, so there exist $A,B\in\mathbb{R}$ such that $\Theta\subset[A,B]$. The map $\mu\mapsto g_\mu(z)$ is continuously differentiable with
\[
\frac{\partial}{\partial\mu}g_\mu(z)
=
z-\mu+\frac{\phi(\xi-\mu)}{\Phi(\xi-\mu)}.
\]
Since $\mu\in\Theta\subset[A,B]$, the term $-\mu+\phi(\xi-\mu)/\Phi(\xi-\mu)$ is bounded uniformly over $\Theta$, so there exists a constant $C>0$ such that
\[
\left|\frac{\partial}{\partial\mu}g_\mu(z)\right|\le |z|+C.
\]
By the mean value theorem, for any $\mu,\mu'\in\Theta$,
\[
|g_\mu(z)-g_{\mu'}(z)|\le|\mu-\mu'|\cdot(|z|+C).
\]
Setting $L(z)=|z|+C$, we have $\mathbb{E}_{\mu_0}[L(Z)]<\infty$ since $X$ follows a truncated normal distribution. Fix $\epsilon>0$ and choose $\delta>0$ such that $2\delta\,\mathbb{E}_{\mu_0}[L(Z)]=\epsilon$. Cover $[A,B]$ by finitely many intervals of length at most $\delta$ with representative points $\mu_1,\dots,\mu_K$, and define the brackets
\[
l_j(z)=g_{\mu_j}(z)-\delta L(z),
\qquad
u_j(z)=g_{\mu_j}(z)+\delta L(z).
\]
Every $g_\mu$ with $\mu\in\Theta$ is sandwiched within one of these brackets, and
\[
\mathbb{E}_{\mu_0}[u_j(Z)-l_j(Z)]=2\delta\,\mathbb{E}_{\mu_0}[L(Z)]=\epsilon.
\]
Hence, the class $\mathcal{G}=\{g_\mu:\mu\in\Theta\}$ has a finite $\epsilon$-bracketing number in $L^1(P_{\mu_0})$. By Lemma~\ref{lemmaC.2}, $\mathcal{G}$ is Glivenko--Cantelli, and thus
\[
\sup_{\mu\in\Theta}\left|M_n(\mu)-M(\mu)\right|\xrightarrow{p}0.
\]

\medskip
\emph{Step 2. Unique maximizer.} 
Next, we show that $\mu_0$ is the unique maximizer of $M(\mu)$ in the sense that for every $\epsilon > 0$, 
\[
\sup_{\mu\in\Theta:|\mu-\mu_0|\ge\epsilon}M(\mu)<M(\mu_0).
\]
For any $\mu\in\Theta$,
\[
M(\mu_0)-M(\mu)
=\mathbb{E}_{\mu_0}\!\left[\log\frac{f_{\mu_0}^T(Z)}{f_\mu^T(Z)}\right]
=\mathrm{KL}(f_{\mu_0}^T\,\|\,f_\mu^T)\ge 0,
\]
so $M(\mu)\le M(\mu_0)$ for all $\mu\in\Theta$. If equality holds, then $\mathrm{KL}(f_{\mu_0}^T\,\|\,f_\mu^T)=0$, which implies $f_{\mu_0}^T(z)=f_\mu^T(z)$ for almost every $z<\xi$, i.e.,
\[
-\frac{1}{2}(z-\mu_0)^2-\log\Phi(\xi-\mu_0)
=
-\frac{1}{2}(z-\mu)^2-\log\Phi(\xi-\mu).
\]
Since both sides are quadratic in $z$ and agree on an interval, comparing coefficients of $z$ yields $\mu=\mu_0$. Hence $M$ is uniquely maximized at $\mu_0$. Since $\Theta$ is compact and $M$ is continuous, this uniqueness implies that for every $\epsilon>0$,
\[
\sup_{\mu\in\Theta:|\mu-\mu_0|\ge\epsilon}M(\mu)<M(\mu_0).
\]

The conclusion $\hat{\mu}_n\xrightarrow{p}\mu_0$ now follows from Lemma~\ref{lemmaC.1}.
\end{proof}

We now turn to the proof of Theorem~\ref{theorem2.3}. To this end, we first establish the following auxiliary lemmas.

\begin{lemma}[Properties of the Skew-Normal Density]\label{lemmaC.3}
Let $f_\sigma$ be the skew-normal density defined as
\[
f_\sigma(z)
=
\frac{2}{\sqrt{1+\sigma^2}}
\phi\!\left(\frac{z}{\sqrt{1+\sigma^2}}\right)
\Phi\!\left(-\frac{\sigma z}{\sqrt{1+\sigma^2}}\right),
\qquad \sigma>0.
\]
Then, the following hold:
\begin{itemize}
    \item[(i)] $f_\sigma$ is real analytic on $\mathbb{R}$ for every $\sigma>0$.
    \item[(ii)] The map $\sigma\mapsto f_\sigma(z)$ is continuously differentiable, and for any compact interval $[A,B]\subset(0,\infty)$, there exists a constant $C>0$ such that
    \[
    \left|\frac{\partial}{\partial\sigma}\log f_\sigma(z)\right|\le C(1+z^2)
    \quad\text{for all }\sigma\in[A,B]\text{ and }z\in\mathbb{R}.
    \]
    \item[(iii)] The mean of $f_\sigma$ is $\mathbb{E}_\sigma[Z]=-\sigma\sqrt{2/\pi}$, which is strictly monotone in $\sigma>0$.
\end{itemize}
\end{lemma}
\begin{proof}
\textit{(i)} Since $\phi$ and $\Phi$ are real analytic on $\mathbb{R}$, and real analyticity is preserved under composition with linear functions, products, and scalar multiplication, $f_\sigma$ is real analytic on $\mathbb{R}$.

\medskip
\noindent\textit{(ii)} We can write
\[
\log f_\sigma(z)
=
\log 2 - \frac{1}{2}\log(1+\sigma^2) - \frac{1}{2}\log(2\pi)
-\frac{z^2}{2(1+\sigma^2)}
+\log\Phi\!\left(-\frac{\sigma z}{\sqrt{1+\sigma^2}}\right),
\]
so that
\[
\frac{\partial}{\partial\sigma}\log f_\sigma(z)
=
-\frac{\sigma}{1+\sigma^2}
+\frac{\sigma z^2}{(1+\sigma^2)^2}
-\frac{z}{(1+\sigma^2)^{3/2}}\cdot\frac{\phi\!\left(-\frac{\sigma z}{\sqrt{1+\sigma^2}}\right)}{\Phi\!\left(-\frac{\sigma z}{\sqrt{1+\sigma^2}}\right)}.
\]
Since each term on the right-hand side is a continuous function of $\sigma$, the map $\sigma\mapsto\partial\log f_\sigma(z)/\partial\sigma$ is continuous, and hence $\sigma\mapsto\partial f_\sigma(z)/\partial\sigma = \partial\log f_\sigma(z)/\partial\sigma\cdot f_\sigma(z)$ is continuous as well. This establishes that $\sigma\mapsto f_\sigma(z)$ is continuously differentiable.

We now establish the bound $|\partial\log f_\sigma(z)/\partial\sigma|\le C(1+z^2)$. Since $\sigma\in[A,B]$, there exists a constant $C_0>0$ such that
\begin{equation}\label{eqn:lemma2_1}
    \left|-\frac{\sigma}{1+\sigma^2}+\frac{\sigma z^2}{(1+\sigma^2)^2}\right|
    \le
    C_0(1+z^2).
\end{equation}
For the remaining term, the Mills ratio bound $\phi(t)/\Phi(-t)\le|t|+\sqrt{2/\pi}$ gives
\begin{equation}\label{eqn:lemma2_2}
    \left|\frac{z}{(1+\sigma^2)^{3/2}}\cdot\frac{\phi\!\left(-\frac{\sigma z}{\sqrt{1+\sigma^2}}\right)}{\Phi\!\left(-\frac{\sigma z}{\sqrt{1+\sigma^2}}\right)}\right|
    \le
    \frac{|\sigma|z^2}{(1+\sigma^2)^{2}}+\sqrt{\frac{2}{\pi}}\frac{|z|}{(1+\sigma^2)^{3/2}}
    \le
    C_1(1+z^2)
\end{equation}
for some constant $C_1>0$, where the last inequality uses $|z|\le 1+z^2$ and $\sigma\in[A,B]$. Combining \eqref{eqn:lemma2_1} and \eqref{eqn:lemma2_2} yields $\left|\partial\log f_\sigma(z)/\partial\sigma\right|\le C(1+z^2)$ for some $C>0$.

\medskip
\noindent\textit{(iii)} This follows directly from the standard moment formula for the skew-normal distribution.
\end{proof}

\begin{lemma}\label{lemmaC.4}
For any compact interval $[A,B]\subset(0,\infty)$, the map $\sigma\mapsto\log F_\sigma(\xi)$ is continuously differentiable on $[A,B]$, and
\[
\sup_{\sigma\in[A,B]}\left|\frac{\partial}{\partial\sigma}\log F_\sigma(\xi)\right|<\infty.
\]
\end{lemma}

\begin{proof}
Using $\Phi(\cdot)\le 1$, $\sigma\in[A,B]$, and the fact that $\phi$ is decreasing in $|z|/\sqrt{1+\sigma^2}$, we obtain
\begin{equation}\label{eqn:lemmaC.4_1}
    f_\sigma(z)
    \le
    \frac{2}{\sqrt{1+A^2}}\phi\!\left(\frac{z}{\sqrt{1+B^2}}\right)
    =:h(z).
\end{equation}
By the mean value theorem, \eqref{eqn:lemmaC.4_1}, and Lemma~\ref{lemmaC.3}(ii), there exists a constant $C>0$ such that
\[
\left|\frac{f_{\sigma+t}(z) - f_\sigma(z)}{t}\right| 
\leq 
\sup_{\sigma \in [A, B]} \left|\frac{\partial}{\partial\sigma}\log f_\sigma(z)\right|\cdot f_\sigma(z)
\le
C(1+z^2)h(z)
\quad\text{for all }\sigma\in[A,B].
\]
Since $\int C(1+z^2)h(z)\,dz<\infty$, the dominated convergence theorem justifies differentiation under the integral sign and simultaneously establishes differentiability of $\sigma\mapsto F_\sigma(\xi)$, with
\[
\frac{\partial}{\partial\sigma}F_\sigma(\xi)
=
\int_{-\infty}^\xi\frac{\partial}{\partial\sigma}f_\sigma(z)\,dz.
\]

Next, we show continuity of this derivative. To this end, let $\sigma_n\to\sigma$ in $[A,B]$. Since $\sigma\mapsto(\partial/\partial\sigma)f_\sigma(z)$ is continuous pointwise in $z$ by Lemma~\ref{lemmaC.3}(ii) and is dominated by the integrable function $C(1+z^2)h(z)$, the dominated convergence theorem gives
\[
\frac{\partial}{\partial\sigma_n}F_{\sigma_n}(\xi)
=
\int_{-\infty}^\xi\frac{\partial}{\partial\sigma_n}f_{\sigma_n}(z)\,dz
\to
\int_{-\infty}^\xi\frac{\partial}{\partial\sigma}f_\sigma(z)\,dz
=
\frac{\partial}{\partial\sigma}F_\sigma(\xi).
\]
Hence $\sigma\mapsto F_\sigma(\xi)$ is continuously differentiable on $[A,B]$.

It remains to pass from $F_\sigma(\xi)$ to $\log F_\sigma(\xi)$. Since $F_\sigma(\xi)=\mathbb{P}_\sigma(Z\le\xi)>0$ for every $\sigma\in[A,B]$ and $\sigma\mapsto F_\sigma(\xi)$ is continuous on the compact set $[A,B]$,
\begin{equation}\label{lemmaC.4_2}
    \inf_{\sigma\in[A,B]}F_\sigma(\xi)>0.       
\end{equation}
Since $\sigma\mapsto F_\sigma(\xi)$ is continuously differentiable and bounded away from zero, the chain rule implies that $\sigma\mapsto\log F_\sigma(\xi)$ is continuously differentiable.

Finally, since $[A,B]$ is compact and $\sigma\mapsto\partial F_\sigma(\xi)/\partial\sigma$ is continuous, $\sup_{\sigma\in[A,B]}|\partial F_\sigma(\xi)/\partial\sigma|<\infty$. Combined with \eqref{lemmaC.4_2}, we have
\[
\sup_{\sigma\in[A,B]}\left|\frac{\partial}{\partial\sigma}\log F_\sigma(\xi)\right|
\le
\frac{\sup_{\sigma\in[A,B]}|\partial F_\sigma(\xi)/\partial\sigma|}{\inf_{\sigma \in [A, B]} F_\sigma(\xi)}
<\infty. \qedhere
\]
\end{proof}

We are now ready to prove Theorem~\ref{theorem2.3}.

\begin{proof}[Proof of Theorem~\ref{theorem2.3}]
Define
\[
g_\sigma(z)=\log f_\sigma^T(z)
=
\log f_\sigma(z)-\log F_\sigma(\xi),
\quad z<\xi,
\]
and the population criterion
\[
M(\sigma)=\mathbb{E}_{\sigma_0}[g_\sigma(Z)], \quad \sigma\in\Theta.
\]
We verify assumptions (ii) and (iii) of Lemma~\ref{lemmaC.1}.

\medskip
\emph{Step 1. Uniform convergence.}
Since $\Theta$ is compact, there exist $A,B>0$ such that $\Theta\subset[A,B]$. By Lemma~\ref{lemmaC.3}(ii) and Lemma~\ref{lemmaC.4}, there exists a constant $C>0$ such that
\[
\left|\frac{\partial}{\partial\sigma}g_\sigma(z)\right|
=
\left|\frac{\partial}{\partial\sigma}\log f_\sigma(z)-\frac{\partial}{\partial\sigma}\log F_\sigma(\xi)\right|
\le C(1+z^2).
\]
By the mean value theorem, for any $\sigma,\sigma'\in\Theta$,
\[
|g_\sigma(z)-g_{\sigma'}(z)|\le|\sigma-\sigma'|\cdot C(1+z^2).
\]
Setting $L(z)=C(1+z^2)$, we have $\mathbb{E}_{\sigma_0}[L(Z)]<\infty$ since $Z$ has finite second moment. Fix $\epsilon>0$ and choose $\delta>0$ such that $2\delta\,\mathbb{E}_{\sigma_0}[L(Z)]=\epsilon$. Cover $[A,B]$ by finitely many intervals of length at most $\delta$ with representative points $\sigma_1,\dots,\sigma_K$, and define the brackets
\[
l_j(z)=g_{\sigma_j}(z)-\delta L(z),
\qquad
u_j(z)=g_{\sigma_j}(z)+\delta L(z).
\]
Every $g_\sigma$ with $\sigma\in\Theta$ is sandwiched within one of these brackets, and
\[
\mathbb{E}_{\sigma_0}[u_j(Z)-l_j(Z)]=2\delta\,\mathbb{E}_{\sigma_0}[L(Z)]=\epsilon.
\]
Hence $\mathcal{G}=\{g_\sigma:\sigma\in\Theta\}$ has a finite $\epsilon$-bracketing number in $L^1(P_{\sigma_0})$, so by Lemma~\ref{lemmaC.2}, the uniform law of large numbers holds:
\[
\sup_{\sigma\in\Theta}|M_n(\sigma)-M(\sigma)|\xrightarrow{p}0.
\]

\medskip
\emph{Step 2. Unique maximizer.}
Next, for any $\sigma\in\Theta$,
\[
M(\sigma_0)-M(\sigma)
=\mathbb{E}_{\sigma_0}\!\left[\log\frac{f_{\sigma_0}^T(Z)}{f_\sigma^T(Z)}\right]
=\mathrm{KL}(f_{\sigma_0}^T\,\|\,f_\sigma^T)\ge 0,
\]
so $M(\sigma)\le M(\sigma_0)$ for all $\sigma\in\Theta$. If equality holds, then $\mathrm{KL}(f_{\sigma_0}^T\,\|\,f_\sigma^T)=0$, which implies $f_{\sigma_0}^T(z)=f_\sigma^T(z)$ for almost every $z<\xi$, i.e.,
\[
\frac{f_{\sigma_0}(z)}{F_{\sigma_0}(\xi)}
=
\frac{f_\sigma(z)}{F_\sigma(\xi)}
\quad\text{for a.e. }z<\xi.
\]
By Lemma~\ref{lemmaC.3}(i), $f_{\sigma_0}$ and $f_\sigma$ are real analytic, so the function
\[
h(z):=\frac{f_{\sigma_0}(z)}{F_{\sigma_0}(\xi)}-\frac{f_\sigma(z)}{F_\sigma(\xi)}
\]
is real analytic on $\mathbb{R}$. Since $h$ is continuous and vanishes almost everywhere on $(-\infty,\xi)$, 
it vanishes everywhere on $(-\infty,\xi)$. The Identity Theorem then implies $h\equiv 0$ on $\mathbb{R}$, i.e.,
\[
\frac{f_{\sigma_0}(z)}{F_{\sigma_0}(\xi)}=\frac{f_\sigma(z)}{F_\sigma(\xi)}\quad\text{for all }z\in\mathbb{R}.
\]
Integrating both sides over $\mathbb{R}$ and using the fact that $f_{\sigma_0}$ and $f_\sigma$ are probability densities gives
\[
\frac{1}{F_{\sigma_0}(\xi)}=\frac{1}{F_\sigma(\xi)},
\]
and hence $F_{\sigma_0}(\xi)=F_\sigma(\xi)$. Substituting back yields $f_{\sigma_0}(z)=f_\sigma(z)$ for all $z\in\mathbb{R}$.
By Lemma~\ref{lemmaC.3}(iii), the mean $\mathbb{E}_\sigma[Z]=-\sigma\sqrt{2/\pi}$ is strictly monotone in $\sigma$, so equal densities imply $\sigma=\sigma_0$. Hence $M$ is uniquely maximized at $\sigma_0$. Since $\Theta$ is compact and $M$ is continuous, for every $\epsilon>0$,
\[
\sup_{\sigma\in\Theta:|\sigma-\sigma_0|\ge\epsilon}M(\sigma)<M(\sigma_0).
\]

The conclusion $\hat{\sigma}_n\xrightarrow{p}\sigma_0$ now follows from Lemma~\ref{lemmaC.1}.
\end{proof}

\section{Implementation Details of Our Proposed Method}\label{appen_d}
All three algorithms (Algorithms~\ref{alg:1}--\ref{alg:3}) require the truncation point $\xi$ as input, which is related to the zero assumption. A smaller value of $\xi$ yields a truncated set $\mathcal{S}_0(\xi)$ that contains more pure null indices, whereas a larger value of $\xi$ results in a more contaminated truncated set since the probability of including non-null indices increases with $\xi$. The optimal choice of $\xi$ depends on the null distribution, the non-null distribution, and their proportions. In this paper, we adopt a heuristic choice for $\xi$. Specifically, we set $\xi$ to be the 85th percentile of the observed statistics, which is equivalent to truncating the top 15\% of the observations. This choice is motivated by the common assumption that the null proportion satisfies $\pi_0 \geq 0.9$ in most applications \citep{efron2004large}. Under this assumption, the non-null proportion is at most 10\% of the observations. If the effect size is sufficiently large, our choice of $\xi$ is expected to remove almost all non-null $Z_i$ while still retaining roughly 85\% of the data for null density estimation. A data-adaptive choice of $\xi$ is left for future research.

For Algorithm~\ref{alg:2}, we additionally specify the search bounds $\eta_{\min}$ and $\eta_{\max}$. Recall that to remove the positivity constraint on $\sigma_0$, we employ the reparameterization $\eta = \log(\sigma_0)$. The bounds $\eta_{\min}$ and $\eta_{\max}$ restrict the range of $\eta$ in order to apply Brent’s method. By default, we set $\eta_{\min}=-6$ and $\eta_{\max}=3$, which roughly correspond to restricting $\sigma_0$ to the interval $[0.0025,20]$. This range is sufficiently wide to cover reasonable values of $\sigma_0$.

For Algorithm~\ref{alg:3}, the number of mixture components $K$ and the support points $\{\mu_k\}_{k=1}^K$ must be specified. The number of components $K$ controls the flexibility of the function class; therefore, we recommend using a sufficiently large value such as $K \geq 50$. Given $K$, the support points $\{\mu_k\}_{k=1}^K$ must also be specified. By default, we place equally spaced support points between $\min(Z_i)$ and $0$. Specifically, without loss of generality, let $\mu_1 \leq \mu_2 \leq \cdots \leq \mu_K$. 
Then, $\mu_1 = \min(Z_i)$, $\mu_K = 0$, and $\{\mu_k\}_{k=1}^K$ are equally spaced on this interval.

\end{document}